\date{}
\newtheorem{theorem}{Theorem}[section]
\newtheorem{lemma}{Lemma}[section]
\newtheorem{remark}{Remark}[section]
\newtheorem{definition}{Definition}[section]
\numberwithin{equation}{section}
\begin{document}
\title{\textbf{The perturbation analysis of nonconvex low-rank matrix robust recovery}}
\author{$^{a}$Jianwen Huang \qquad$^{b}$Wendong Wang\qquad $^{b,c}$Feng Zhang \qquad $^{c}$Jianjun Wang\thanks{Corresponding author, E-mail: wjjmath@gmail.com, wjj@swu.edu.cn(J.J. Wang), E-mail: hjw1303987297@126.com (J. Huang)}    \\
{\small $^{a}$School of Mathematics and Statistics, Tianshui Normal University, Tianshui, 741001, China}\\
{\small $^{b}$School of Mathematics and Statistics, Southwest University, Chongqing, 400715, China}\\
{\small $^c$College of Artificial Intelligence, Southwest
University, Chongqing, 400715, China} }

\maketitle
\begin{quote}
{\bf Abstract.}~~In this paper, we bring forward a completely perturbed nonconvex Schatten $p$-minimization to address a model of completely perturbed low-rank matrix recovery. The paper that based on the restricted isometry property generalizes the investigation to a complete perturbation model thinking over not only noise but also perturbation, gives the restricted isometry property condition that guarantees the recovery of low-rank matrix and the corresponding reconstruction error bound. In particular, the analysis of the result reveals that in the case that $p$ decreases $0$ and $a>1$ for the complete perturbation and low-rank matrix, the condition is the optimal sufficient condition $\delta_{2r}<1$ \cite{Recht et al 2010}. The numerical experiments are conducted to show better performance, and provides outperformance of the nonconvex Schatten $p$-minimization method comparing with the convex nuclear norm minimization approach in the completely perturbed scenario.

{\bf Keywords.}~~Low rank matrix recovery; Perturbation of linear transformation; Nonconvex Schatten $p$-minimization
\end{quote}

\section{Introduction}
\label{sec1}

Low-rank matrix recovery (LMR) is a rapidly developing topic attracting the interest of numerous researchers in the field of optimization and compressed sensing. Mathematically, we can describe it as follows:
\begin{align}\label{eq.1}
y=\mathcal{A}(X)
\end{align}
where $\mathcal{A}:~\mathbb{R}^{m\times n}\to\mathbb{R}^M$ is a known linear transformation (we suppose that $m\leq n$), $y\in\mathbb{R}^M$ is a given observation vector, and $X\in\mathbb{R}^{m\times n}$ is the matrix to be recovered. The objective of LMR is to find the lowest rank matrix based on $(y,~\mathcal{A})$. If the observation $y$ is corrupted by noise $z$, model (\ref{eq.1}) is changed into the following form
\begin{align}\label{eq.2}
\hat{y}=\mathcal{A}(X)+z
\end{align}
where $\hat{y}$ is the noisy measurement, and $z$ is the additive noise independent of the matrix $X$. However, more LMR models can be encountered where not only the linear measurement $y$ is contaminated by the noise vector $z$, but also the linear transformation $\mathcal{A}$ is perturbed by $\mathcal{E}$ for completely perturbed setting, namely, substitute the linear transformation $\mathcal{A}$ with $\mathcal{\hat{A}}=\mathcal{A}+\mathcal{E}$. The completely perturbed appearance arises in remote sensing\cite{Fannjiang et al 2010}, radar\cite{Herman and Strohmer 2009}, source separation\cite{Blumensath and Davies 2007}, etc. When $m=n$ and the matrix $X=\mbox{diag}(x)~(x\in\mathbb{R}^m)$ is diagonal, models (\ref{eq.1}) and (\ref{eq.2}) degenerates to the compressed sensing models
\begin{align}
\label{eq.3}y&=Ax,\\
\label{eq.4}\hat{y}&=Ax+z
\end{align}
where $A\in\mathbb{R}^{M\times m}$ is a measurement matrix and $x\in\mathbb{R}^m$ is an unknown sparse signal. We call the problem (\ref{eq.3}) as the sparse signal recovery. For the completely perturbed model, the convex nuclear norm minimization is frequently considered \cite{Huang et al 2019} as follows:
\begin{align}\label{eq.5}
\min_{\tilde{Z}\in\mathbb{R}^{m\times n}}\|\tilde{Z}\|_*~\mbox{s.t.}~\|\hat{\mathcal{A}}(\tilde{Z})-\hat{y}\|_2\leq\epsilon'_{\mathcal{A},r,y},
\end{align}
where $\|\tilde{Z}\|_*$ is the nuclear norm of the matrix $\tilde{Z}$, that is, the sum of its singular values, and $\epsilon'_{\mathcal{A},r,y}$ is the total noise level. Problem (\ref{eq.5}) can be reduced to the $l_1$-minimization \cite{Herman and Strohmer 2010}
\begin{align}\label{eq.6}
\min_{\tilde{z}\in\mathbb{R}^{n_1}}\|\tilde{z}\|_1~\mbox{s.t.}~\|\hat{A}\tilde{z}-\hat{y}\|_2\leq\epsilon'_{A,r,y},
\end{align}
where $\|\tilde{z}\|_1$ is the $l_1$-norm of the vector $\tilde{z}$, that is, the sum of absolute value of its coefficients.

Chartrand \cite{Chartrand 2007} showed that fewer measurements are required for exact reconstruction if $l_1$-norm is substituted with $l_p$-norm. There exist many work regarding reconstructing $x$ via the $l_p$-minimization \cite{Chartrand and Staneva 2008}, \cite{Foucart and Lai 2009}, \cite{Lai and Liu 2011}, \cite{Lai et al 2013}, \cite{Wang Y et al 2014}, \cite{Song and Xia 2014}, \cite{Wang J et al 2015}, \cite{Wen J et al 2015}, \cite{Wen F et al 2017}, \cite{Gao et al 2017}, \cite{Zhang and Li 2017}, \cite{Wen F et al 2018}. In \cite{Chartrand 2007}, numerical simulations demonstrated that fewer measurements are needed for exact reconstruction than when $p=1$.

In this paper, we are interested in the completely perturbed model for the nonconvex Schatten $p$-minimization ($0<p<1$)
\begin{align}\label{eq.7}
\min_{\tilde{Z}\in\mathbb{R}^{m\times n}}\|\tilde{Z}\|^p_p~\mbox{s.t.}~\|\hat{\mathcal{A}}(\tilde{Z})-\hat{y}\|_2\leq\epsilon'_{\mathcal{A},r,y},
\end{align}
where $\|\tilde{Z}\|^p_p$ is the Schatten $p$ quasi-norm of the matrix $\tilde{Z}$, that is, $\|\tilde{Z}\|^p_p=(\sum_i\sigma^p_i(\tilde{Z}))^{1/p}$ with $\sigma_i(\tilde{Z})$ being $i$th singular value of $\tilde{Z}$. Problem (\ref{eq.7}) can be returned to the $l_p$-minimization \cite{Ince and Nacaroglu 2014}
\begin{align}\label{eq.8}
\min_{\tilde{z}\in\mathbb{R}^{M\times m}}\|\tilde{z}\|^p_p~\mbox{s.t.}~\|\hat{A}\tilde{z}-\hat{y}\|_2\leq\epsilon'_{A,r,y},
\end{align}
where $\|\tilde{z}\|^p_p=(\sum_i\tilde{z}^p_i)^{1/p}$ is the $l_p$-quasi-norm of the vector $\tilde{z}$. To the best of our knowledge, recently researches are considered only in unperturbed situation ($\mathcal{E}=0$), that is, the linear transformation $\mathcal{A}$ is not perturbed by $\mathcal{E}$ (for related work, see \cite{Mohan and Fazel 2010}, \cite{Dvijotham and Fazel 2010}, \cite{Zhang et al 2013}, \cite{Kong and Xiu 2013}, \cite{Wang and Li 2013}, \cite{Chen and Li 2015}, \cite{Gao Y et al 2017b}, \cite{Wang W et al 2019}). From the perspective of application, it is more practical to investigate the recovery of low-rank matrices in the scenario of complete perturbation.

In this paper, based on restricted isometry property (RIP), the performance of low-rank matrices reconstruction is showed by the nonconvex Schatten $p$-minimization in completely perturbed setting. The main contributions of this paper are as follows. First, we present a sufficient condition for reconstruction of low-rank matrices via the nonconvex Schatten $p$-minimization. Second, the estimation accurateness between the optimal solution and the original matrix is described by a total noise and a best $r$-rank approximation error. The result reveals that stable and robust performance concerning reconstruction of low-rank matrices in existence of total noise. Third, numerical experiments are conducted to sustain the gained results, and demonstrate that the performance of nonconvex Schatten $p$-minimization can be better than that of convex nuclear norm minimization in completely perturbed model.

The rest of this paper is constructed as follows.

\section{Notation and main results}

Before presenting the main results, we first introduce the notion of RIC of a linear transformation $\mathcal{A}$, which is as follows.
\begin{definition}\label{def.1}
The restricted isometry constant (RIC) $\delta_r$ of a linear transformation $\mathcal{A}$ is the smallest constant such that
\begin{align}\label{eq.9}
(1-\delta)\|X\|^2_F\leq\|\mathcal{A}(X)\|_2^2\leq (1+\delta)\|X\|^2_F
\end{align}
holds for all $r$-rank $X\in\mathbb{R}^{m\times n}$ (i.e., $rank(X)\leq r$), where $\|X\|_F:=\sqrt{\left<X,X\right>}=\sqrt{trace(X^{\top}X)}$ is the Frobenius norm of the matrix $X$.
\end{definition}
Then we provide some notations similar to \cite{Huang et al 2019}, which quantifying the perturbations $\mathcal{E}$ and $z$ with the bounds:
\begin{align}\label{eq.10}
\frac{\|\mathcal{E}\|_{op}}{\|\mathcal{A}\|_{op}}\leq\epsilon_{\mathcal{A}},~
\frac{\|\mathcal{E}\|^{(r)}_{op}}{\|\mathcal{A}\|^{(r)}_{op}}\leq\epsilon^{(r)}_{\mathcal{A}},~
\frac{\|z\|_2}{\|y\|_2}\leq\epsilon_{y},
\end{align}
where $\|\mathcal{A}\|_{op}=\sup\{\|\mathcal{A}(X)\|_2/\|X\|_F:~X\in\mathbb{R}^{m\times n}\setminus\{0\}\}$ is the operator norm of linear transformation $\mathcal{A}$, and $\|\mathcal{A}\|^{(r)}_{op}=\sup\{\|\mathcal{A}(X)\|_2/\|X\|_F:~X\in\mathbb{R}^{m\times n}\setminus\{0\}~\mbox{and}~\mbox{rank}(X)\leq r\}$, and representing
\begin{align}\label{eq.11}
t_r=\frac{\|X_{[r]^c}\|_F}{\|X_{[r]}\|_F},~s_r=\frac{\|X_{[r]^c}\|_*}{\sqrt{r}\|X_{[r]}\|_F},~
\kappa^{(r)}_{\mathcal{A}}=\frac{\sqrt{1+\delta_r}}{\sqrt{1-\delta_r}},~
\alpha_{\mathcal{A}}=\frac{\|\mathcal{A}\|_{op}}{\sqrt{1-\delta_r}}.
\end{align}
Here $X_{[r]}$ is the best $r$-rank approximation of the matrix $X$, its singular values are composed of $r$-largest singular values of the matrix $X$, and $X_{[r]^c}=X-X_{[r]}$. With notations and symbols above, we present our results for reconstruction of low-rank matrices via the completely perturbed nonconvex Schatten $p$-minimization.
\begin{theorem}\label{the.1}
For given relative perturbations $\epsilon_{\mathcal{A}}$, $\epsilon^{(r)}_{\mathcal{A}}$, $\epsilon^{(2r)}_{\mathcal{A}}$, and $\epsilon_{y}$ in (\ref{eq.10}), suppose the RIC for the linear transformation $\mathcal{A}$ fulfills
\begin{align}\label{eq.12}
\delta_{2ar}<\frac{2+\sqrt{2}a^{1/2-1/p}}{(1+\sqrt{2}a^{1/2-1/p})(1+\epsilon^{(2ar)}_{\mathcal{A}})^2}-1
\end{align}
for $a>1$ and
that the general matrix $X$ meets
\begin{align}\label{eq.13}
t_r+s_r<\frac{1}{\kappa^{(r)}_{\mathcal{A}}}.
\end{align}
Then a minimizer $X^*$ of problem (\ref{eq.7}) approximates the true matrix $X$ with errors
\begin{align}
\label{eq.14}\|X-X^*\|^p_F&\leq C_1(\epsilon'_{\mathcal{A},r,y})^p+C_2\frac{\|X_{[r]^c}\|^p_p}{r^{1-p/2}},\\
\label{eq.15}\|X-X^*\|^p_p&\leq C'_1r^{1-p/2}(\epsilon'_{\mathcal{A},r,y})^p+C'_2\|X_{[r]^c}\|^p_p,
\end{align}
where the total noise is
\begin{align}\label{eq.16}
\epsilon'_{\mathcal{A},r,y}=\left[\frac{\epsilon^{(r)}_{\mathcal{A}}\kappa^{(r)}_{\mathcal{A}}
+\epsilon_{\mathcal{A}}\alpha_{\mathcal{A}}t_r}{1-\kappa^{(r)}_{\mathcal{A}}(t_r+s_r)}+\epsilon_{y}\right]\|y\|_2,
\end{align}
and
\begin{align}
\label{eq.17}
C_1&=\frac{2^{p}(1+a^{p/2-1})
(1+\hat{\delta}_{(a+1)r})^{p/2}}{(1-\hat{\delta}_{(a+1)r})^{p}-a^{p/2-1}(\hat{\delta}_{(a+1)r}^2+\hat{\delta}_{2ar}^2)^{p/2}},\\
\label{eq.18}
C_2&=2a^{p/2-1}[1+\frac{(1+a^{p/2-1})
(\hat{\delta}_{(a+1)r}^2+\hat{\delta}_{2ar}^2)^{p/2}}{(1-\hat{\delta}_{(a+1)r})^{p}-a^{p/2-1}(\hat{\delta}_{(a+1)r}^2+\hat{\delta}_{2ar}^2)^{p/2}}],\\
\label{eq.19}
C'_1&=\frac{2^{p+1}(1+a)^{1-p/2}
(1+\hat{\delta}_{(a+1)r})^{p/2}}{(1-\hat{\delta}_{(a+1)r})^{p}-a^{p/2-1}(\hat{\delta}_{(a+1)r}^2+\hat{\delta}_{2ar}^2)^{p/2}},\\
\label{eq.18}
C'_2&=2+\frac{4(1+a)^{1-p/2}a^{p/2-1}
(\hat{\delta}_{(a+1)r}^2+\hat{\delta}_{2ar}^2)^{p/2}}{(1-\hat{\delta}_{(a+1)r})^{p}-a^{p/2-1}(\hat{\delta}_{(a+1)r}^2+\hat{\delta}_{2ar}^2)^{p/2}},
\end{align}
where $\hat{\delta}_{(a+1)r}=(1+\delta_{(a+1)r})(1+\epsilon_{A}^{((a+1)r)})^2-1,~~\hat{\delta}_{2ar}=(1+\delta_{2ar})(1+\epsilon_{A}^{(2ar)})^2-1$.
\end{theorem}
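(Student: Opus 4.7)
The plan is to adapt the restricted isometry framework for nonconvex $l_p$ sparse recovery (in the spirit of Foucart--Lai and Lai--Liu) to the Schatten $p$ setting, and to absorb the operator perturbation $\mathcal{E}$ through a ``perturbed RIP'' with effective constants $\hat{\delta}_k=(1+\delta_k)(1+\epsilon_{\mathcal{A}}^{(k)})^2-1$. The roadmap has four stages: (i) show that the true $X$ is feasible for (\ref{eq.7}), so that the error $h:=X^*-X$ lies in a tube; (ii) translate the RIP of $\mathcal{A}$ into a RIP of $\hat{\mathcal{A}}$; (iii) use the Schatten-$p$ optimality of $X^*$ to derive a matrix cone inequality; (iv) run a block SVD decomposition argument to extract the two error bounds.

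First, I would write $\hat{\mathcal{A}}(X)-\hat{y}=\mathcal{E}(X)-z=\mathcal{E}(X_{[r]})+\mathcal{E}(X_{[r]^c})-z$. Bounding $\|\mathcal{E}(X_{[r]})\|_2\leq\epsilon_{\mathcal{A}}^{(r)}\|\mathcal{A}\|_{op}^{(r)}\|X_{[r]}\|_F$, $\|\mathcal{E}(X_{[r]^c})\|_2\leq\epsilon_{\mathcal{A}}\|\mathcal{A}\|_{op}\|X_{[r]^c}\|_F$, combining with the bound $\|y\|_2\geq \|\mathcal{A}(X_{[r]})\|_2-\|\mathcal{A}(X_{[r]^c})\|_2$, and using (\ref{eq.13}) to keep this denominator positive, should yield $\|\hat{\mathcal{A}}(X)-\hat{y}\|_2\leq\epsilon'_{\mathcal{A},r,y}$; condition (\ref{eq.13}) is exactly what makes this feasibility hold. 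Then the triangle inequality gives $\|\hat{\mathcal{A}}(h)\|_2\leq 2\epsilon'_{\mathcal{A},r,y}$. Next, for any rank-$k$ matrix $Z$, $\|\hat{\mathcal{A}}(Z)\|_2^2\leq(1+\hat{\delta}_k)\|Z\|_F^2$ and symmetrically $(1-\hat{\delta}_k)\|Z\|_F^2\leq\|\hat{\mathcal{A}}(Z)\|_2^2$, producing the hat-constants appearing in $C_1,C_2,C_1',C_2'$.

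The core of the argument is the block-decomposition step. Using the Schatten $p$-minimality $\|X^*\|_p^p\leq\|X\|_p^p$ together with the quasi-triangle inequality for the Schatten $p$-norm on orthogonally-decomposed matrices (which in this setting reduces via a von~Neumann-type trick to the vector inequality applied to singular value sequences), one obtains the matrix cone condition
\begin{equation*}
\|h_{[r]^c}\|_p^p\leq\|h_{[r]}\|_p^p+2\|X_{[r]^c}\|_p^p .
\end{equation*}
I would then sort the singular values of $h_{[r]^c}$ in decreasing order and partition them into blocks of size $ar$ to form rank-$ar$ pieces $h_{T_1},h_{T_2},\ldots$, so that the matrix analogue of the block inequality gives $\|h_{T_j}\|_F\leq a^{1/2-1/p}\|h_{T_{j-1}}\|_p$. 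Applying the perturbed RIP to the rank-$(a+1)r$ matrix $h_{T_0\cup T_1}$ on one side, and to each $h_{T_j}$ on the other, together with the polarization identity $|\langle \hat{\mathcal{A}}(U),\hat{\mathcal{A}}(V)\rangle|\leq\sqrt{\hat{\delta}_{(a+1)r}^2+\hat{\delta}_{2ar}^2}\,\|U\|_F\|V\|_F$ for orthogonal $U,V$ of appropriate ranks, yields
\begin{equation*}
\|h_{T_0\cup T_1}\|_F\leq \frac{2(1+\hat{\delta}_{(a+1)r})^{1/2}\,\epsilon'_{\mathcal{A},r,y}+(\hat{\delta}_{(a+1)r}^2+\hat{\delta}_{2ar}^2)^{1/2}\cdot a^{1/2-1/p}\,r^{-1/p+1/2}(2\|X_{[r]^c}\|_p^p+\|h_{[r]}\|_p^p)}{1-\hat{\delta}_{(a+1)r}} .
\end{equation*}
Raising this to the $p$-th power, bounding $\|h_{[r]}\|_p^p\leq r^{1-p/2}\|h_{T_0\cup T_1}\|_F^p$, rearranging, and using tail bounds $\|h_{[r]^c}\|_F^p\leq a^{p/2-1}\|h_{[r]^c}\|_p^p/r^{1-p/2}$ (again for $0<p<1$) produces (\ref{eq.14}) with the stated $C_1,C_2$. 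Bound (\ref{eq.15}) follows similarly by using $\|h\|_p^p\leq(1+a)^{1-p/2}\|h_{T_0\cup T_1}\|_F^p+2\|h_{T_0^c}\|_p^p$ and inserting the cone inequality.

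The main obstacle is the third stage: one must (a) justify the Schatten-$p$ cone inequality for $0<p<1$ despite the quasi-norm failing subadditivity, which requires Mirsky/Weyl singular-value inequalities applied blockwise; (b) control the RIP threshold so that the denominator $(1-\hat{\delta}_{(a+1)r})^p-a^{p/2-1}(\hat{\delta}_{(a+1)r}^2+\hat{\delta}_{2ar}^2)^{p/2}$ is positive, which is exactly what the rearrangement of (\ref{eq.12}) enforces after substituting the definition of $\hat{\delta}_{2ar}$; and (c) track constants carefully so as to recover precisely the expressions in (\ref{eq.17})--(\ref{eq.18}). Stages (i), (ii), and (iv) are essentially bookkeeping once these matrix analogues of the vector $l_p$ inequalities are in hand.
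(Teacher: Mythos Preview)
Your proposal is correct and follows essentially the same route as the paper's proof: both establish feasibility of $X$ under (\ref{eq.13}), invoke the perturbed RIP $\hat\delta_k$ for $\hat{\mathcal A}$, derive the Schatten-$p$ cone inequality $\|Z_{T_0^c}\|_p^p\le\|Z_{T_0}\|_p^p+2\|X_{[r]^c}\|_p^p$, partition the tail into rank-$ar$ blocks, and combine the orthogonal inner-product RIP estimate with $\|Z_{T_0}\|_p^p\le r^{1-p/2}\|Z_{T_{01}}\|_F^p$ to close the loop. The only slip is that your displayed first-power bound on $\|h_{T_0\cup T_1}\|_F$ should carry $(2\|X_{[r]^c}\|_p^p+\|h_{[r]}\|_p^p)^{1/p}$ rather than the $p$-th power itself; the paper sidesteps this by working at the $p$-th-power level throughout (using $(a+b)^p\le a^p+b^p$), which you should do as well to arrive cleanly at the stated constants.
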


\begin{remark}
Theorem \ref{the.1} gives a sufficient conditions for the reconstruction of low-rank matrices via nonconvex Schatten $p$-minimization in completely perturbed scenario. Condition (\ref{eq.12}) of the Theorem extends the assumption of $l_p$ situation in \cite{Ince and Nacaroglu 2014} to the nonconvex Schatten $p$-minimization. Observe that as the value of $p$ becomes large, the bound of RIC $\delta_{2ar}$ reduces, which reveals that smaller value of $p$ can induce weaker reconstruction guarantee. Particularly, when $p\to 0~(a>1)$ ((\ref{eq.7}) degenerates to the rank minimization: $\min_{\tilde{Z}\in\mathbb{R}^{m\times n}}\mbox{rank}(\tilde{Z})~\mbox{s.t.}~\|\hat{\mathcal{A}}(\tilde{Z})-\hat{y}\|_2\leq\epsilon'_{\mathcal{A},r,y}$), it leads to the RIP condition $\delta_{2r}<2/(1+\epsilon^{(2ar)}_{\mathcal{A}})^2-1$ for reconstruction of low-rank matrices via the rank minimization, to the best of our knowledge, the current optimal recovery condition about RIP is $\delta_{2r}<1$ to ensure exact reconstruction for $r$-rank matrices via rank minimization \cite{Recht et al 2010}, therefore the Theorem extends that condition to the scenario of presence of noise and $r$-rank matrices. Furthermore, when $m=n$ and the matrix $X=\mbox{diag}(x)~(x\in\mathbb{R}^m)$ is diagonal, the Theorem reduces to the case of compressed sensing given by \cite{Ince and Nacaroglu 2014}.
\end{remark}

\begin{remark}
Under the requirement (\ref{eq.12}), one can easily check that the condition (\ref{eq.13}) is satisfied. Besides, when $\mbox{rank}(X)\leq r$, the condition (\ref{eq.13}) holds. Additionally, the inequalities (\ref{eq.14}) and (\ref{eq.15}) in Theorem \ref{the.1} which exploit two kinds of metrics provide upper bound estimations on the reconstruction of nonconvex Schatten $p$-minimization. The estimations evidence that reconstruction accurateness can be controlled by the best $r$-rank approximation error and the total noise. In particular, when there aren't noise (i.e., $\mathcal{E}=0$ and $z=0$), they clear that the $r$-rank matrix can be accurately reconstructed via the nonconvex Schatten $p$-minimization. In (\ref{eq.14}), both the error bound noise constant $C_1$ and the error bound compressibility constant $C_2$ may rely on the value of $p$. Numerical simulations reveal that when we fix the other independent parameters, a smaller value of $p$ will produce a smaller $C_1$ and a smaller $C_2/r^{1-p/2}$. For more details, see Fig. \ref{fig.1}.

\begin{figure}[ht]
\begin{center}
\subfigure[]{\includegraphics[width=0.40\textwidth]{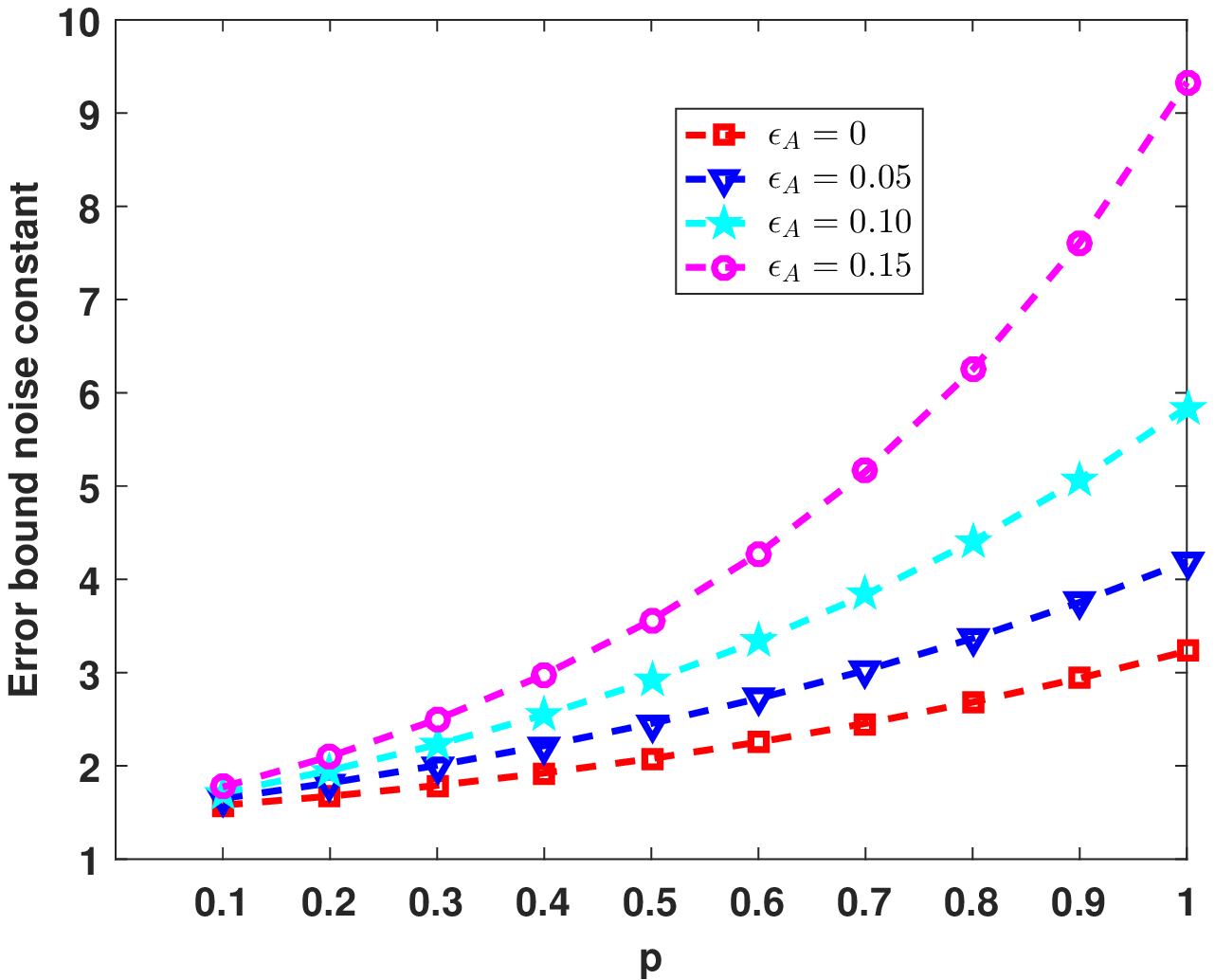}}
\subfigure[]{\includegraphics[width=0.40\textwidth]{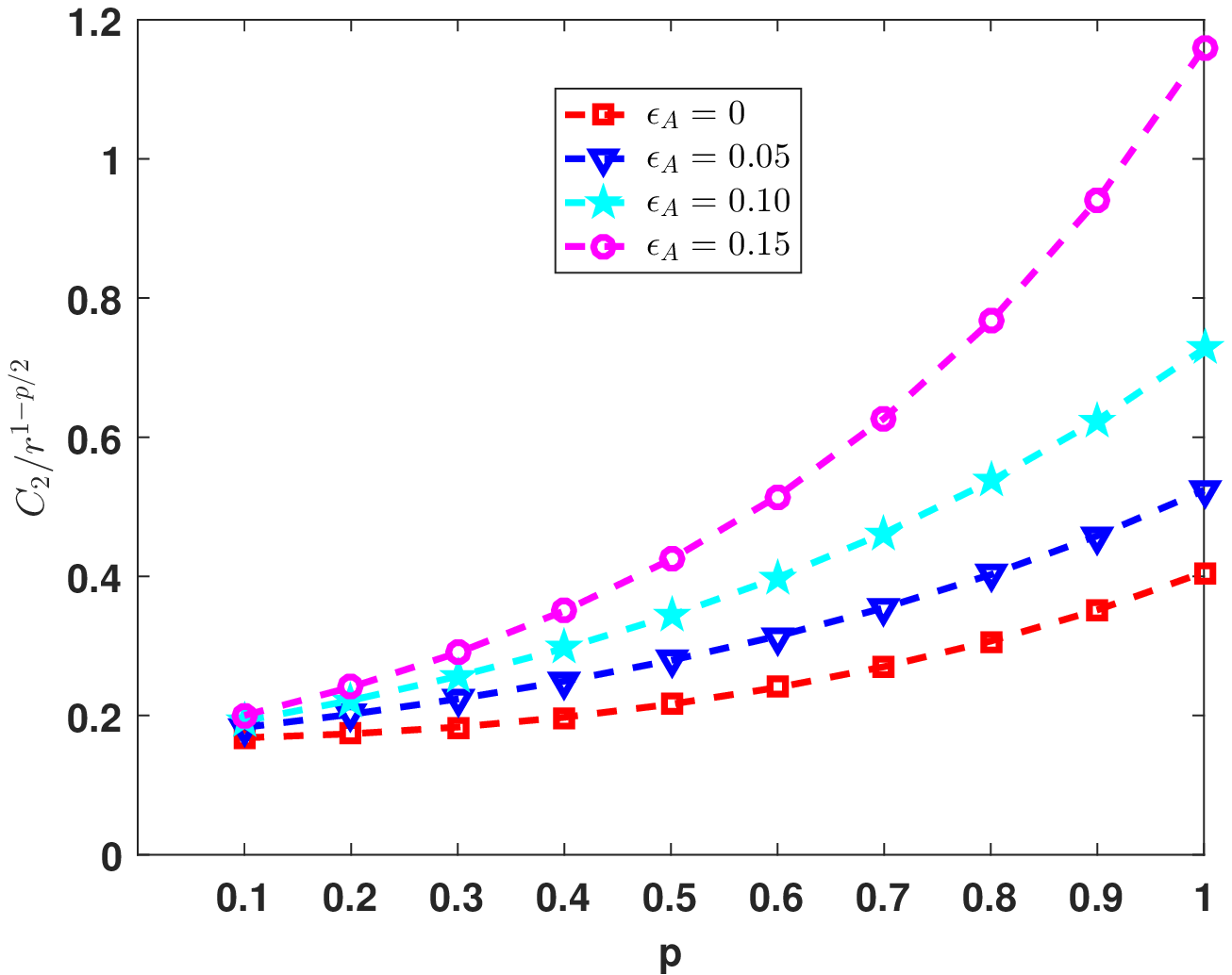}}
\caption{The error bound noise constant $C_1$, and the error bound compressibility constant $C_2/r^{1-p/2}$ versus $p$ for $a=5, \delta_{(a+1)r}=\delta_{2ar}=0.05$ in (a) and (b), respectively.}\label{fig.1}
\end{center}
\vspace*{-14pt}
\end{figure}
\end{remark}

\begin{remark}
When the matrix $X$ is a strictly $r$-rank matrix (i.e., $X=X_{[r]}$), a minimizer $X^*$ of problem (\ref{eq.7}) approximates the true matrix $X$ with errors
\begin{align}
\notag\|X-X^*\|_F&\leq C^{1/p}_1\epsilon'_{\mathcal{A},r,y},\\
\notag\|X-X^*\|_p&\leq C'^{1/p}_1r^{1/p-1/2}\epsilon'_{\mathcal{A},r,y},
\end{align}
where
\begin{align}
\notag\epsilon'_{\mathcal{A},r,y}=[\epsilon^{(r)}_{\mathcal{A}}\kappa^{(r)}_{\mathcal{A}}+\epsilon_{y}]\|y\|_2.
\end{align}
In the case of $\mathcal{E}=0$, that is, there doesn't exist perturbation in the linear transformation $\mathcal{A}$, then $\hat{\delta}_{(a+1)r}=\delta_{(a+1)r},$ $\hat{\delta}_{2ar}=\delta_{2ar}$. In the case that $m=n$, the matrix $X=\mbox{diag}(x)~(x\in\mathbb{R}^m)$ is diagonal (i.e., the results of Theorem reduce to the case of compressed sensing), $p=1$ and $a=1$, our result contains that of Theorem $2$ in \cite{Herman and Strohmer 2010}.
\end{remark}

\section{Proofs of the main results}

In this part, we will provide the proofs of main results. In order to prove our main results, we need the following auxiliary lemmas. Firstly, we give Lemma \ref{lem.1} which  incorporates an important inequality associating with $\delta_r$ and $\hat{\delta}_r$.

\begin{lemma}\label{lem.1}(RIP for $\mathcal{\hat{A}}$ \cite{Huang et al 2019}) Given the RIC $\delta_r$ related with linear transformation $\mathcal{A}$ and the relative perturbation $\epsilon^{(r)}_{\mathcal{A}}$ corresponded with linear transformation $\mathcal{E}$, fix the constant $\hat{\delta}_{r,\max}=(1+\delta_r)(1+\epsilon^{(r)}_{\mathcal{A}})^2-1$. Then the RIC $\hat{\delta}_r\leq\hat{\delta}_{r,\max}$ for $\mathcal{\hat{A}}=\mathcal{A}+\mathcal{E}$ is the smallest nonnegative constant such that
\begin{align}\label{eq.20}
(1-\hat{\delta}_r)\|X\|^2_F\leq\|\mathcal{\hat{A}}(X)\|_2^2\leq (1+\hat{\delta}_r)\|X\|^2_F
\end{align}
holds for all matrices $X\in\mathbb{R}^{m\times n}$ that are $r$-rank.
\end{lemma}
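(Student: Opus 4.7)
The plan is to exhibit a constant, namely $\hat{\delta}_{r,\max} = (1+\delta_r)(1+\epsilon^{(r)}_{\mathcal{A}})^2 - 1$, that satisfies both the upper and lower RIP inequalities for $\hat{\mathcal{A}}$ on all $r$-rank matrices; since $\hat{\delta}_r$ is the smallest such constant, the conclusion $\hat{\delta}_r \leq \hat{\delta}_{r,\max}$ follows immediately. So the work reduces to verifying the two inequalities for $\hat{\delta}_{r,\max}$.

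For any matrix $X$ of rank at most $r$, I would start from the triangle inequality $\|\mathcal{A}(X) + \mathcal{E}(X)\|_2 \leq \|\mathcal{A}(X)\|_2 + \|\mathcal{E}(X)\|_2$. Two ingredients then take care of the right side: the RIP of $\mathcal{A}$ gives $\|\mathcal{A}(X)\|_2 \leq \sqrt{1+\delta_r}\,\|X\|_F$; and the definition of $\|\cdot\|_{op}^{(r)}$ together with $\|\mathcal{E}\|_{op}^{(r)} \leq \epsilon_{\mathcal{A}}^{(r)} \|\mathcal{A}\|_{op}^{(r)}$ and the inequality $\|\mathcal{A}\|_{op}^{(r)} \leq \sqrt{1+\delta_r}$ (which is just the RIP upper bound rewritten) gives $\|\mathcal{E}(X)\|_2 \leq \epsilon_{\mathcal{A}}^{(r)}\sqrt{1+\delta_r}\,\|X\|_F$. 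Adding these yields $\|\hat{\mathcal{A}}(X)\|_2 \leq \sqrt{1+\delta_r}\,(1+\epsilon_{\mathcal{A}}^{(r)})\|X\|_F$, and squaring gives exactly the upper RIP bound $\|\hat{\mathcal{A}}(X)\|_2^2 \leq (1+\hat{\delta}_{r,\max})\|X\|_F^2$.

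For the lower bound I would apply the reverse triangle inequality to obtain $\|\hat{\mathcal{A}}(X)\|_2 \geq \|\mathcal{A}(X)\|_2 - \|\mathcal{E}(X)\|_2 \geq \bigl(\sqrt{1-\delta_r} - \epsilon_{\mathcal{A}}^{(r)}\sqrt{1+\delta_r}\bigr)\|X\|_F$. Then I need to verify that, upon squaring, the resulting constant dominates $1 - \hat{\delta}_{r,\max} = 2 - (1+\delta_r)(1+\epsilon_{\mathcal{A}}^{(r)})^2$. This is where a short algebraic check is required: expanding both sides, the difference reduces to $2\epsilon_{\mathcal{A}}^{(r)}\sqrt{1+\delta_r}\bigl[\sqrt{1+\delta_r} - \sqrt{1-\delta_r}\bigr] + 2(\epsilon_{\mathcal{A}}^{(r)})^2(1+\delta_r)$, which is manifestly non-negative. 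If the parenthesized factor $\sqrt{1-\delta_r} - \epsilon_{\mathcal{A}}^{(r)}\sqrt{1+\delta_r}$ happens to be negative, the right-hand side $1 - \hat{\delta}_{r,\max}$ is itself $\leq 0$ and the lower bound is vacuous, so the inequality still holds.

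The only mildly tricky step is the squared comparison in the lower bound, since $\hat{\delta}_{r,\max}$ is not the sharpest constant one could extract from the triangle-inequality argument, but it is the simple symmetric choice that matches the upper bound; the algebraic check above confirms it is valid. Once both inequalities are in place, the minimality definition of $\hat{\delta}_r$ closes the proof.
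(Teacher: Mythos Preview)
The paper does not actually prove this lemma; it merely states it and cites the authors' companion paper \cite{Huang et al 2019} for the proof. So there is no in-paper argument to compare against.

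That said, your proposed argument is correct and is the standard one for results of this type (it mirrors the vector-case argument of Herman--Strohmer). The upper bound is exactly as you describe. For the lower bound, your algebraic check is right; the one point you assert without justification is that when $\sqrt{1-\delta_r}-\epsilon_{\mathcal{A}}^{(r)}\sqrt{1+\delta_r}<0$ one automatically has $1-\hat{\delta}_{r,\max}\le 0$. This is true: if $\epsilon_{\mathcal{A}}^{(r)}>\sqrt{(1-\delta_r)/(1+\delta_r)}$ then
\[
(1+\delta_r)\bigl(1+\epsilon_{\mathcal{A}}^{(r)}\bigr)^2
>(1+\delta_r)+2\sqrt{(1-\delta_r)(1+\delta_r)}+(1-\delta_r)
=2+2\sqrt{1-\delta_r^2}\ge 2,
\]
so $1-\hat{\delta}_{r,\max}<0$ and the lower RIP inequality is vacuous. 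It would be worth writing this line out explicitly in the final version rather than leaving it as a remark.
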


We will employ the fact that $\mathcal{\hat{A}}$ maps low-rank orthogonal matrices to nearly sparse orthogonal vectors, which is given by \cite{Candes and Plan 2011}.
\begin{lemma}\label{lem.2}(\cite{Candes and Plan 2011})
For all $X,~Y$ satisfying $\left<X,Y\right>=0$, and $\mbox{rank}(X)\leq r_1$, $\mbox{rank}(Y)\leq r_2$,
\begin{align}\label{eq.22}
\left|\left<\mathcal{\hat{A}}(X),\mathcal{\hat{A}}(Y)\right>\right|\leq\hat{\delta}_{r_1+r_2}\|X\|_F\|Y\|_F.
\end{align}
\end{lemma}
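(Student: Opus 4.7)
The plan is to follow the classical matrix near-orthogonality argument: reduce to a unit-norm configuration, expand $\|X\pm Y\|_F^2$ using the Frobenius orthogonality of $X$ and $Y$, apply the RIP for $\hat{\mathcal{A}}$ supplied by Lemma \ref{lem.1} to both $X+Y$ and $X-Y$ (each of rank at most $r_1+r_2$), and then extract the cross inner product via the polarization identity in the measurement space $\mathbb{R}^M$. If either $X$ or $Y$ is the zero matrix the inequality is trivial, so by homogeneity of both sides in each variable it suffices to establish the bound when $\|X\|_F=\|Y\|_F=1$, in which case the hypothesis $\langle X,Y\rangle=0$ immediately gives
\[
\|X\pm Y\|_F^2 = \|X\|_F^2 \pm 2\langle X,Y\rangle + \|Y\|_F^2 = 2.
\]

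Next, because rank is subadditive, both $X+Y$ and $X-Y$ have rank at most $r_1+r_2$, so Lemma \ref{lem.1} with the constant $\hat{\delta}_{r_1+r_2}$ yields the two-sided bound $2(1-\hat{\delta}_{r_1+r_2})\leq\|\hat{\mathcal{A}}(X\pm Y)\|_2^2\leq 2(1+\hat{\delta}_{r_1+r_2})$. I would then invoke the polarization identity
\[
\langle \hat{\mathcal{A}}(X),\hat{\mathcal{A}}(Y)\rangle = \tfrac{1}{4}\bigl(\|\hat{\mathcal{A}}(X+Y)\|_2^2 - \|\hat{\mathcal{A}}(X-Y)\|_2^2\bigr),
\]
and estimate the right-hand side by pairing the upper RIP bound on the first term with the lower bound on the second (and vice versa), which produces $|\langle \hat{\mathcal{A}}(X),\hat{\mathcal{A}}(Y)\rangle|\leq\tfrac{1}{4}\bigl[2(1+\hat{\delta}_{r_1+r_2})-2(1-\hat{\delta}_{r_1+r_2})\bigr]=\hat{\delta}_{r_1+r_2}$. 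Reinstating the factor $\|X\|_F\|Y\|_F$ by the earlier normalization closes the proof.

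The only real subtlety is the rank bookkeeping: one must confirm that $\mbox{rank}(X\pm Y)\leq r_1+r_2$, which is the standard rank-subadditivity inequality and needs no further structure. It is also worth noting that the hypothesis is merely Frobenius orthogonality $\langle X,Y\rangle=0$, which is strictly weaker than column- or row-space orthogonality; nonetheless it is exactly what the argument consumes, since it enters only through the cancellation of the cross term in $\|X\pm Y\|_F^2$. Apart from this observation, every step mirrors the familiar sparse-vector near-orthogonality lemma, with the matrix RIP of Lemma \ref{lem.1} taking the place of the usual vector RIP.
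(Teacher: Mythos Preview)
Your argument is correct and is precisely the standard polarization-plus-RIP proof of this near-orthogonality lemma. Note, however, that the paper itself does not give a proof of this statement: it is quoted directly from Cand\`es and Plan \cite{Candes and Plan 2011}, so there is no in-paper argument to compare against. Your write-up matches the classical proof found in that reference and in the sparse-vector analogue.
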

Moreover, the following lemma will be utilized in the proof of main result, which combines with Lemma $2.3$ \cite{Recht et al 2010} and Lemma $2.2$ \cite{Kong and Xiu 2013}.
\begin{lemma}\label{lem.3}
Assume that $X,~Y\in\mathbb{R}^{m\times n}$ obey $X^{\top}Y=0$ and $XY^{\top}=0$. Let $0<p\leq 1$. Then
\begin{align}\label{eq.23}
\|X+Y\|^p_p=\|X\|^p_p+\|Y\|^p_p,~\|X+Y\|_p\geq\|X\|_p+\|Y\|_p,
\end{align}
where $\|X\|^p_p$ and $\|X\|_p$ stand for the nuclear norm of matrix $X$ in the case of $p=1$.
\end{lemma}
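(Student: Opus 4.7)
The plan is to reduce the statement to an identification of singular values that follows from the orthogonality hypotheses. First I would fix thin singular value decompositions $X = U_1 \Sigma_1 V_1^{\top}$ and $Y = U_2 \Sigma_2 V_2^{\top}$, where $\Sigma_1, \Sigma_2$ are square diagonal matrices containing only the positive singular values. Expanding $XY^{\top} = U_1 \Sigma_1 (V_1^{\top} V_2) \Sigma_2 U_2^{\top} = 0$ and multiplying on the left by $\Sigma_1^{-1} U_1^{\top}$ and on the right by $U_2 \Sigma_2^{-1}$ forces $V_1^{\top} V_2 = 0$, so the row spaces of $X$ and $Y$ are orthogonal; the analogous manipulation of $X^{\top} Y = 0$ gives $U_1^{\top} U_2 = 0$, so the column spaces are orthogonal as well.

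Consequently $[U_1\ U_2]$ and $[V_1\ V_2]$ have orthonormal columns, and $X+Y = [U_1\ U_2]\,\mathrm{diag}(\Sigma_1,\Sigma_2)\,[V_1\ V_2]^{\top}$ is a valid SVD of $X+Y$ up to ordering of the diagonal entries. Hence the multiset of nonzero singular values of $X+Y$ is the disjoint union of those of $X$ and $Y$, which immediately delivers $\|X+Y\|_p^p = \sum_i \sigma_i(X)^p + \sum_j \sigma_j(Y)^p = \|X\|_p^p + \|Y\|_p^p$, matching the first claim (and the $p=1$ specialization recovers additivity of the nuclear norm).

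For the second claim I would set $a = \|X\|_p$ and $b = \|Y\|_p$; by the identity just established, $\|X+Y\|_p = (a^p + b^p)^{1/p}$. Since $0 < p \leq 1$, the function $t \mapsto t^p$ is concave on $[0,\infty)$ with $0^p = 0$, hence subadditive: $(a+b)^p \leq a^p + b^p$. Rearranging and taking $p$-th roots yields $\|X+Y\|_p \geq a + b = \|X\|_p + \|Y\|_p$, as desired.

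The only mildly delicate step is justifying the cancellations that extract $V_1^{\top} V_2 = 0$ and $U_1^{\top} U_2 = 0$ from the given matrix equations, which crucially uses the thin-SVD convention so that $\Sigma_1$ and $\Sigma_2$ are invertible; once that is handled, everything after is bookkeeping on singular values together with the elementary subadditivity of $t \mapsto t^p$.
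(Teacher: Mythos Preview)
Your proof is correct. The paper does not supply its own argument for this lemma but simply cites Lemma~2.3 of Recht--Fazel--Parrilo and Lemma~2.2 of Kong--Xiu; your proof is precisely the standard argument underlying those references---the orthogonality hypotheses force the row and column spaces of $X$ and $Y$ to be mutually orthogonal, so a block SVD shows the nonzero singular values of $X+Y$ are the disjoint union of those of $X$ and $Y$, after which both identities are immediate from the subadditivity of $t\mapsto t^p$ on $[0,\infty)$ for $0<p\le 1$.
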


For any matrix $X\in\mathbb{R}^{m\times n}$, we represent the singular values decomposition (SVD) of $X$ as
\begin{align}\notag
X=U\mbox{diag}(\sigma(X))V^{\top},
\end{align}
where $\sigma(X):=(\sigma_1(X),\cdots,\sigma_m(X))$ is the vector of the singular values of $X$, $U$ and $V$ are respectively the left and right singular value matrices of $X$.

\begin{proof}[Proof of the theorem \ref{the.1}]
Let $X$ denote the original matrix to be recovered and $X^*$ denote the optimal solution of (\ref{eq.7}). Let $Z=X-X^*$, and based on the SVD of $X$, its  SVD is given by
\begin{align}\notag
U^{\top}ZV=U_1\mbox{diag}(\sigma(U^{\top}ZV))V^{\top}_1,
\end{align}
where $U_1,~V_1\in\mathbb{R}^{m\times m}$ are orthogonal matrices, and $\sigma(U^{\top}ZV)$ stands for the vector comprised of the singular values of $U^{\top}ZV$. Let $T_0$ is the set composed of the locations of the $r$ largest magnitudes of elements of $\sigma(X)$. We adopt technology similar to the reference \cite{Ince and Nacaroglu 2014} to partition $\sigma(U^{\top}ZV)$ into a sum of vectors $\sigma_{T_i}(U^{\top}ZV)~(i=0,1,\cdots,J)$, where $T_1$ is the set composed of the locations of the $ar$ largest magnitudes of entries of $\sigma_{T^c_0}(U^{\top}ZV)$, $T_2$ is the set composed of the locations of the second $ar$ largest magnitudes of entries of $\sigma_{T^c_0}(U^{\top}ZV)$, and so forth (except possibly $T_J$). Then $Z=\sum_{i=0}^JZ_{T_i}$ where $Z_{T_i}=UU_1\mbox{diag}(\sigma_{T_i}(U^{\top}ZV))(VV_1)^{\top}$, $i=0,1,\cdots,J$. One can easily verify that $Z^{\top}_{T_i}Z_{T_j}=0$ and $Z_{T_i}Z^{\top}_{T_j}=0$ for all $i\neq j$, and $\mbox{rank}(Z_{T_0})\leq r$, $\mbox{rank}(Z_{T_j})\leq ar$, $i=0,1,\cdots,J$. For  simplicity, denote $T_{01}=T_0\bigcup T_1$. Then, we have (see (22) in \cite{Zhang et al 2013}, Lemma $2.6$ \cite{Chen and Li 2015})
\begin{align}\label{eq.24}
\|Z_{T^c_0}\|^p_p\leq\|Z_{T_0}\|^p_p+2\|X_{[r]^c}\|^p_p.
\end{align}
By the decomposition of $Z$, for each $l\in T_i,~k\in T_{i-1},~i\geq 2$, $\sigma_{T_{i}}(U^{\top}ZV)[l]\leq\sigma_{T_{i-1}}(U^{\top}ZV)[k]$, it implies that
\begin{align}\label{eq.25}
(\sigma_{T_{i}}(U^{\top}ZV)[l])^p\leq \frac{\sum^{ar}_{k=1}(\sigma_{T_{i-1}}(U^{\top}ZV)[l])^p}{ar}=
\frac{\|\sigma_{T_{i-1}}(U^{\top}ZV)\|^p_p}{ar}=\frac{\|Z_{T_{i-1}}\|^p_p}{ar},
\end{align}
which deduces
\begin{align}\label{eq.26}
\|Z_{T_{i}}\|^2_F\leq(ar)^{1-\frac{2}{p}}\|Z_{T_{i-1}}\|^2_p.
\end{align}
Thereby,
\begin{align}\label{eq.27}
\|Z_{T_{i}}\|^p_F\leq(ar)^{\frac{p}{2}-1}\|Z_{T_{i-1}}\|^p_p.
\end{align}
Notice that $Z^{\top}_{T_i}Z_{T_j}=0$ and $Z_{T_i}Z^{\top}_{T_j}=0$ for all $i\neq j$, due to Lemma \ref{lem.3} and (\ref{eq.27}), then we can get
\begin{align}\label{eq.28}
\sum_{i\geq2}\|Z_{T_{i}}\|^p_F\leq(ar)^{\frac{p}{2}-1}\sum_{i\geq2}\|Z_{T_{i-1}}\|^p_p
=(ar)^{\frac{p}{2}-1}\|Z_{T^c_0}\|^p_p.
\end{align}
By the inequality $\|Z_{T_0}\|^p_F\leq\|Z_{T_{01}}\|^p_F$ and H$\ddot{o}$lder's inequality, we get
\begin{align}\label{eq.29}
\|Z_{T_0}\|^p_p\leq r^{1-\frac{p}{2}}\|Z_{T_{01}}\|^p_F.
\end{align}
From (\ref{eq.24}), (\ref{eq.28}), (\ref{eq.29}) and the inequality that for every fixed $n\in\mathbb{N}$, and any $0<\alpha\leq1$, $(\sum_{i=1}^nx)^{\alpha}\leq\sum_{i=1}^nx^{\alpha}$ for every $x_i\geq0,~i=1,\cdots,n$, it follows
\begin{align}\label{eq.30}
\|Z_{T^c_{01}}\|^p_F=(\sum_{i\geq2}\|Z_{T_{i}}\|^2_F)^{\frac{p}{2}}\leq\sum_{i\geq2}\|Z_{T_{i}}\|^p_F
\leq(ar)^{\frac{p}{2}-1}(r^{1-\frac{p}{2}}\|Z_{T_{01}}\|^p_F+2\|X_{[r]^c}\|^p_p).
\end{align}
Since
\begin{align}
\notag\|\mathcal{\hat{A}}(Z_{T_{01}})\|^2_2&=<\hat{\mathcal{A}}(Z_{T_{01}}),\mathcal{\hat{A}}(Z_{T_{01}})>\\
\notag&=<\hat{\mathcal{A}}(Z_{T_{01}}),\mathcal{\hat{A}}(Z)>
-<\hat{\mathcal{A}}(Z_{T_{01}}),\sum_{i\geq2}\mathcal{\hat{A}}(Z_{T_i})>\\
\label{eq.31}&\leq\|\hat{\mathcal{A}}(Z_{T_{01}})\|_2\|\hat{\mathcal{A}}(Z)\|_2
+\sum_{i\geq2}|<\hat{\mathcal{A}}(Z_{T_{01}}),\mathcal{\hat{A}}(Z_{T_i})>|,
\end{align}
we get
\begin{align}\label{eq.32}
\|\mathcal{\hat{A}}(Z_{T_{01}})\|^{2p}_2\overset{\text{(a)}}{\leq}\|\hat{\mathcal{A}}(Z_{T_{01}})\|^p_2\|\hat{\mathcal{A}}(Z)\|^p_2
+\sum_{i\geq2}|<\hat{\mathcal{A}}(Z_{T_{01}}),\mathcal{\hat{A}}(Z_{T_i})>|^p,
\end{align}
where (a) follows from the fact that $(a+b)^p\leq a^p+b^p$ for nonnegative $a$ and $b$.

Additionally, by the minimality of $X^*$, we get
\begin{align}\label{eq.33}
\|\mathcal{\hat{A}}(Z)\|^2_2\leq\|\hat{y}-\mathcal{\hat{A}}(X)\|^2_2
+\|\hat{y}-\mathcal{\hat{A}}(X^*)\|^2_2\leq2\epsilon'_{\mathcal{A},r,y}.
\end{align}
Since $Z_{T_{01}}$ is $(a+1)r$-rank and $Z_{T_i}$ is $ar$-rank, $i\geq2$, by applying the RIP of $\mathcal{\hat{A}}$ and combination with (\ref{eq.32}) and (\ref{eq.33}), we get
\begin{align}\label{eq.34}
\|\mathcal{\hat{A}}(Z_{T_{01}})\|^{2p}_2\leq(2\epsilon'_{\mathcal{A},r,y})^p
(1+\hat{\delta}_{(a+1)r})^{\frac{p}{2}}\|Z_{T_{01}}\|^p_F
+\sum_{i\geq2}|<\hat{\mathcal{A}}(Z_{T_{01}}),\mathcal{\hat{A}}(Z_{T_i})>|^p.
\end{align}
Because $<Z_{T_i},Z_{T_j}>$ for all $i\neq j$, and $Z_{T_0}$ is $r$-rank, by Lemma \ref{lem.2} and (\ref{eq.30}), we get
\begin{align}
\notag\|\mathcal{\hat{A}}(Z_{T_{01}})\|^{2p}_2
&\leq(2\epsilon'_{\mathcal{A},r,y})^p
(1+\hat{\delta}_{(a+1)r})^{\frac{p}{2}}\|Z_{T_{01}}\|^p_F
+(\hat{\delta}_{(a+1)r}\|Z_{T_0}\|_F+\hat{\delta}_{2ar}\|Z_{T_1}\|_F)^p\
\sum_{i\geq2}\|Z_{T_i}\|^p_F\\
\notag&\leq(2\epsilon'_{\mathcal{A},r,y})^p
(1+\hat{\delta}_{(a+1)r})^{\frac{p}{2}}\|Z_{T_{01}}\|^p_F\\
\label{eq.35}&\quad+(\hat{\delta}_{(a+1)r}\|Z_{T_0}\|_F+\hat{\delta}_{2ar}\|Z_{T_1}\|_F)^p
(ar)^{\frac{p}{2}-1}(r^{1-\frac{p}{2}}\|Z_{T_{01}}\|^p_F+2\|X_{[r]^c}\|^p_p)
\end{align}
From (\ref{eq.12}), one can easily check that
\begin{align}\label{eq.36}
a^{\frac{p}{2}-1}(\hat{\delta}_{(a+1)r}^2+\hat{\delta}_{2ar}^2)^{\frac{p}{2}}
<(1-\hat{\delta}_{(a+1)r})^{\frac{p}{2}}.
\end{align}
By (\ref{eq.35}), (\ref{eq.36}) and the inequality
$\|\mathcal{\hat{A}}(Z_{T_{01}})\|^{p}_2\geq(1-\hat{\delta}_{(a+1)r})^{\frac{p}{2}}\|Z_{T_{01}}\|^p_F$, one can get
\begin{align}
\notag\|Z_{T_{01}}\|^p_F\leq&\frac{2^p(1+a^{\frac{p}{2}-1})(1+\hat{\delta}_{(a+1)r})^{\frac{p}{2}}}
{(1-\hat{\delta}_{(a+1)r})^{\frac{p}{2}}-a^{\frac{p}{2}-1}(\hat{\delta}_{(a+1)r}^2+\hat{\delta}_{2ar}^2)^{\frac{p}{2}}}
(\epsilon'_{\mathcal{A},r,y})^p\\
\notag&+\frac{2a^{\frac{p}{2}-1}(\hat{\delta}_{(a+1)k}^2+\hat{\delta}_{2ar}^2)^{\frac{p}{2}}}
{(1-\hat{\delta}_{(a+1)r})^{\frac{p}{2}}-a^{\frac{p}{2}-1}(\hat{\delta}_{(a+1)r}^2+\hat{\delta}_{2ar}^2)^{\frac{p}{2}}}
\frac{\|X_{[r]^c}\|^p_p}{r^{1-\frac{p}{2}}}\\
\label{eq.37}=&:\beta(\epsilon'_{\mathcal{A},r,y})^p+\gamma\frac{\|X_{[r]^c}\|^p_p}{r^{1-\frac{p}{2}}},
\end{align}
consequently,
\begin{align}
\label{eq.38}\|Z_{T_0}\|^p_p&\leq r^{1-\frac{p}{2}}\|Z_{T_0}\|^p_F\\
\notag&\leq\beta r^{1-\frac{p}{2}}(\epsilon'_{\mathcal{A},r,y})^p+\gamma\|X_{[r]^c}\|^p_p.
\end{align}
Thus, from (\ref{eq.30}) and (\ref{eq.37}), we get
\begin{align}
\notag\|Z\|^p_F&\leq\|Z_{T_{01}}\|^p_F+\|Z_{T^c_{01}}\|^p_F\\
\label{eq.39}&\leq C_1(\epsilon'_{\mathcal{A},r,y})^p+C_2\frac{\|X_{[r]^c}\|^p_p}{r^{1-\frac{p}{2}}};
\end{align}
in addition, a combination of (\ref{eq.24}) and (\ref{eq.38}), one can get
\begin{align}
\notag\|Z\|^p_p&\leq\|Z_{T_0}\|^p_p+\|Z_{T^c_0}\|^p_p\\
\label{eq.40}&\leq C'_1r^{1-\frac{p}{2}}(\epsilon'_{\mathcal{A},r,y})^p+C'_2\|X_{[r]^c}\|^p_p,
\end{align}
where the constants $C_1$, $C_2$, $C'_1$ and $C'_2$ are defined in Theorem \ref{the.1}. The proof is complete.
\end{proof}

\section{Numerical experiments}

In this section, we carry out some numerical experiments to sustain verification of our theoretical results, we implement all experiments in MATLAB 2016a running on a PC with an Inter core i7 processor (3.6 GHz) with 8 GB RAM. In order to address the completely perturbed nonconvex Schatten $p$-minimization model, we employ the alternating direction method of multipliers (ADMM) method, which is often applied in compressed sensing and sparse approximation \cite{Lu C et al 2018}, \cite{Wen F et al TSP 2017}, \cite{Wang W et al 2017}, \cite{Wang and Wang 2018}. The constrained optimization problem (\ref{eq.7}) can be transformed into an equivalent unconstrained form
\begin{align}\label{eq.41}
\min_{\tilde{Z}\in\mathbb{R}^{m\times n}}\lambda\|\tilde{Z}\|^p_p+\frac{1}{2}\|\hat{A}\mbox{vec}(\tilde{Z})-\hat{y}\|^2_2,
\end{align}
where $\hat{A}\in\mathbb{R}^{M\times mn}$, $\mbox{vec}(\tilde{Z})$ represents the vectorization of $\tilde{Z}$. Hence, $\hat{A}\mbox{vec}(\tilde{Z})$ presents the linear map $\hat{\mathcal{A}}(\tilde{Z})$. Then, introducing an auxiliary variable $W\in\mathbb{R}^{m\times n}$, the problem (\ref{eq.41}) can be equivalently turned into

\begin{align}\label{eq.42}
\min_{W,~\tilde{Z}\in\mathbb{R}^{m\times n}}\lambda\|W\|^p_p+\frac{1}{2}\|\hat{A}\mbox{vec}(\tilde{Z})-\hat{y}\|^2_2~\mbox{s.t.}~
\tilde{Z}=W.
\end{align}
The augmented Lagrangian function is provided by
\begin{align}\label{eq.43}
L_{\rho}(\tilde{Z},W,Y)=\lambda\|W\|^p_p+\frac{1}{2}\|\hat{A}\mbox{vec}(\tilde{Z})-\hat{y}\|^2_2+
<Y,\tilde{Z}-W>+\frac{\rho}{2}\|\tilde{Z}-W\|^2_F,
\end{align}
where $Y\in\mathbb{R}^{m\times n}$ is dual variable, and $\rho>0$ is a penalty parameter. Then, ADMM used to (\ref{eq.43}) comprises of the iterations as follows
\begin{align}
\label{eq.44}\tilde{Z}^{k+1}&=\arg\min_{\tilde{Z}}\frac{1}{2}\|\hat{A}\mbox{vec}(\tilde{Z})-\hat{y}\|^2_2+
\frac{\rho}{2}\|\tilde{Z}-(W^k-\frac{Y^k}{\rho})\|^2_F,\\
\label{eq.45}W^{k+1}&=\arg\min_{W}\lambda\|W\|^p_p+\frac{\rho}{2}\|\tilde{Z}^{k+1}-(W-\frac{Y^k}{\rho})\|^2_F,\\
\label{eq.46}Y^{k+1}&=Y^k+\rho(\tilde{Z}^{k+1}-W^{k+1}).
\end{align}
All solving processes are concluded in Algorithm 4.1.

\begin{algorithm}[H]
\label{alg.1}\caption{: Solve problem (\ref{eq.7}) by ADMM } %算法的名字
\begin{algorithmic}[1]
\State Input $A\in\mathbb{R}^{M\times mn}$, $y\in\mathbb{R}^M$, perturbation $E\in\mathbb{R}^{M\times mn}$ with $\|E\|=\epsilon_A\|A\|$, $p\in(0,1]$.
\State Initialize $\hat{\mathcal{Z}}^0=W^0=Y^0$, $\gamma=1.1$, $\lambda_0 = 10^{-6}$, $\lambda_{\max} = 10^{10}$, $\rho= 10^{-6}$, $\varepsilon=10^{-8}$, $k=0$.
\While{not converged} % While语句，需要和EndWhile对应
　　\State Updated $\tilde{Z}^{k+1}$ by $$\tilde{z}=(\hat{A}^{\top}\hat{A}+\rho I)^{-1}\left(\hat{A}^{\top}\hat{y}+\rho\mbox{vec}(W^k)-\mbox{vec}(Y^k)\right);$$
     $\tilde{Z}^{k+1}\leftarrow\tilde{z}$: reshape $\tilde{z}$ to the matrix $\tilde{Z}^{k+1}$ of size $m\times n$.
    \State Update $W^{k+1}$ by
    $$\arg\min_{W}\lambda\|W\|^p_p+\frac{\rho}{2}\|\tilde{Z}^{k+1}-(W-\frac{Y^k}{\rho})\|^2_F;$$
     \State Update $Y^{k+1}$ by
     $$Y^{k+1}=Y^k+\rho(\tilde{Z}^{k+1}-W^{k+1});$$
     \State Update $\lambda_{j+1}$ by $\lambda_{j+1}=\min(\gamma\lambda_{j},\lambda_{\max});$
     \State Check the convergence conditions
      $$\|\tilde{Z}^{k+1}-\tilde{Z}^{k}\|_{\infty}\leq\varepsilon,~\|W^{k+1}-W^{k}\|_{\infty}\leq\varepsilon,$$
      $$\|\hat{A}\mbox{vec}(\tilde{Z}^{k+1})-\hat{y}\|_{\infty}\leq\varepsilon,~\|\tilde{Z}^{k}-W^{j+1}\|_{\infty}\leq\varepsilon.$$

\EndWhile
\end{algorithmic}
\end{algorithm}

In our experiments, we generate a measurement matrix $A\in\mathbb{R}^{M\times mn}$ with i.i.d. Gaussian $\mathcal{N}(0,1/M)$ elements. We generate $X\in\mathbb{R}^{m\times n}$ of rank $r$ by $X=PQ$, where $P\in\mathbb{R}^{m\times r}$ and $Q\in\mathbb{R}^{r\times n}$ are with its elements being zero-mean, one-variation Gaussian, i.i.d. random variables. We select $M=660$, $m=n=30$ and $r=0.2m$. With $X$ and $A$, the measurements $y$ are produced by $y=A\mbox{vec}(X)+z$, where $z$ is the Gaussian noise. The perturbation matrix $E$ is with its entries following Gaussian distribution, which fulfills $\|E\|=\epsilon_A\|A\|$, where $\epsilon_A$ denotes the perturbation level of $A$ and its value is not fixed. The perturbed matrix $\hat{A}$, $\hat{A}=A+E$, is used in (\ref{eq.44}). To avoid the randomness, we perform 100 times independent trails as well as the average result in all test.

To look for a proper parameter $\lambda$ that derives the better recovery effect, we carry out two sets of trails. Fig. \ref{fig.2} (a) and (b) respectively plot the parameter $\lambda$ and relative error (RelError, $\|X-X^*\|_F/\|X\|_F$) results in different $p$ values and perturbation level $\epsilon_A$. $\lambda$ ranges from $10^{-6}$ to $1$. Fig. \ref{fig.2} (a) and (b) show that $\lambda\in[10^{-6},10^{-2}]$ is relatively suitable.

\begin{figure}[h]
\begin{center}
\subfigure[]{\includegraphics[width=0.4544\textwidth,height=0.349\textwidth]{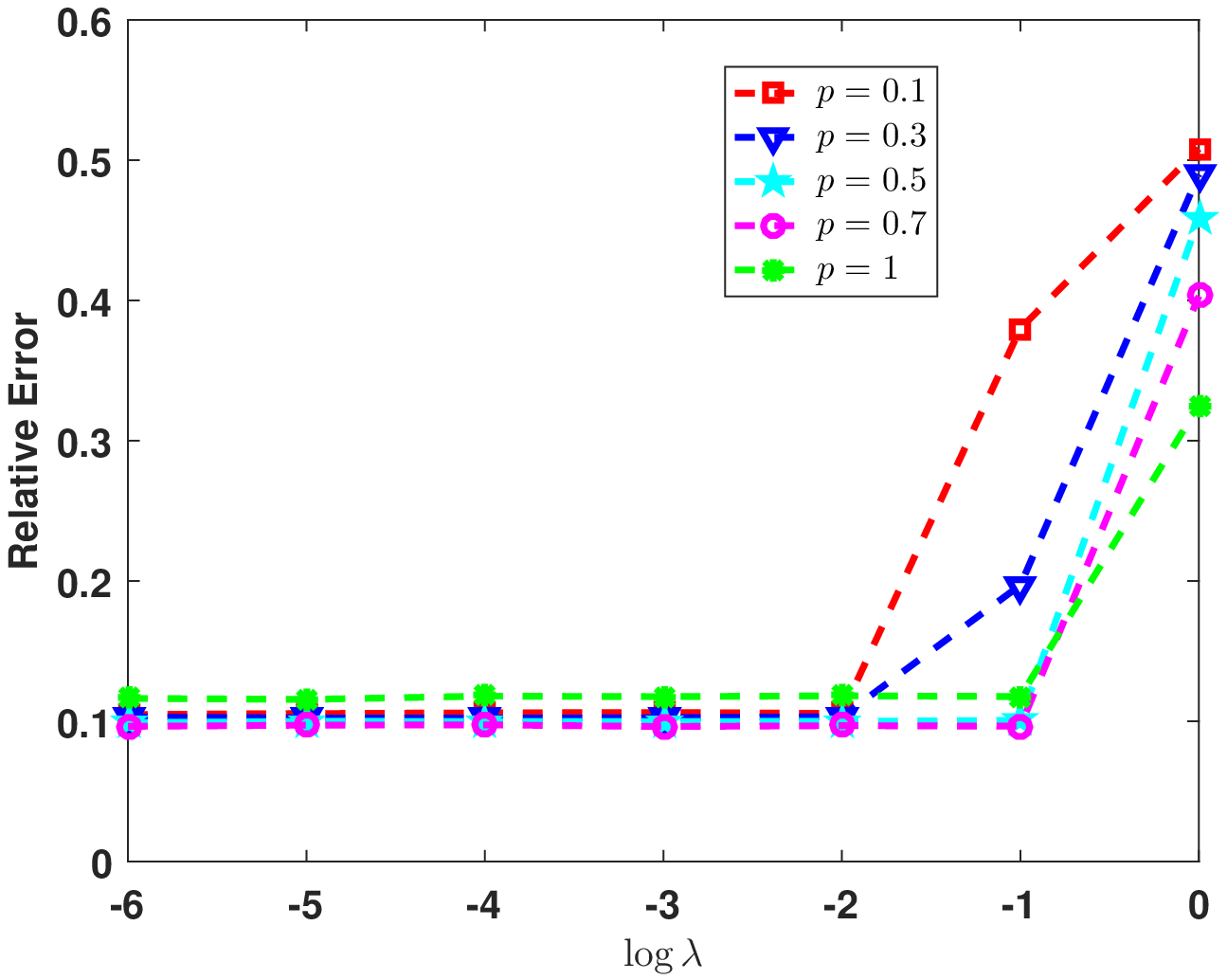}}
\subfigure[]{\includegraphics[width=0.4544\textwidth,height=0.349\textwidth]{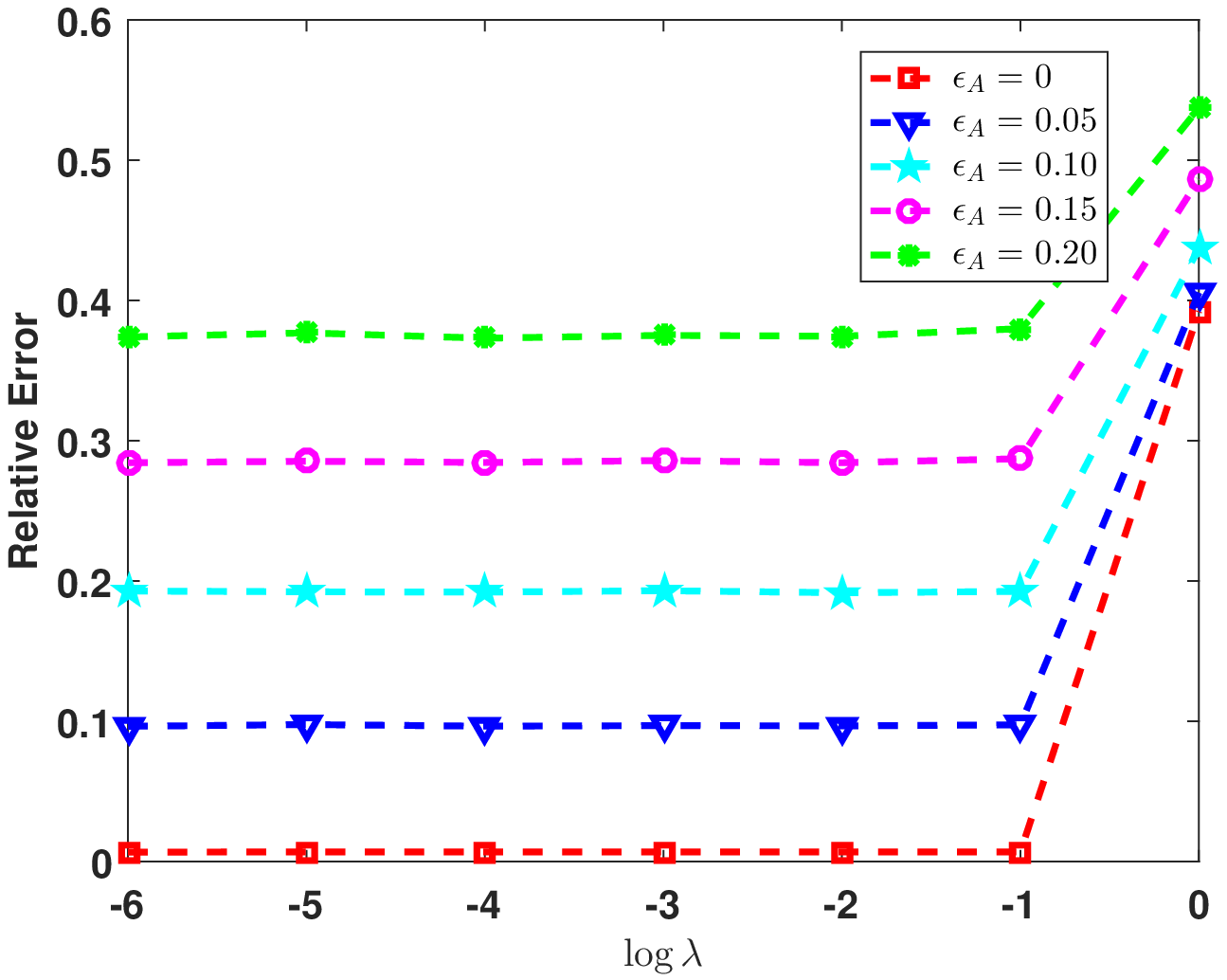}}
\caption{Parameter selection $\lambda$ for (a) $\epsilon_A=0.05$, (b) $p=0.7$}\label{fig.2}
\end{center}
\vspace*{-14pt}
\end{figure}

By giving $\lambda=10^{-6}$, we consider the convergence of Algorithm 4.1. Fig. \ref{fig.3}(a) presents the relationship between relative neighboring iteration error (RNIE, $r(k)=\|X^{k+1}-X^k\|_F/\|X^k\|_F$) and number of iterations $k$. One can easily see that with the increasing of iterations, RNIE decreases quickly, and when $k\geq250$, $r(k)<10^{-4}.$ The results that relative error versus the values of $p$ in different $\epsilon_A$ are showed in Fig. \ref{fig.3}(b). Fig. \ref{fig.3}(b) indicates that the proper choice of the size of $p$ will be helpful to facilitate the performance of nonconvex Schatten $p$-minimization.

\begin{figure}[h]
\begin{center}
\subfigure[]{\includegraphics[width=0.4344\textwidth,height=0.349\textwidth]{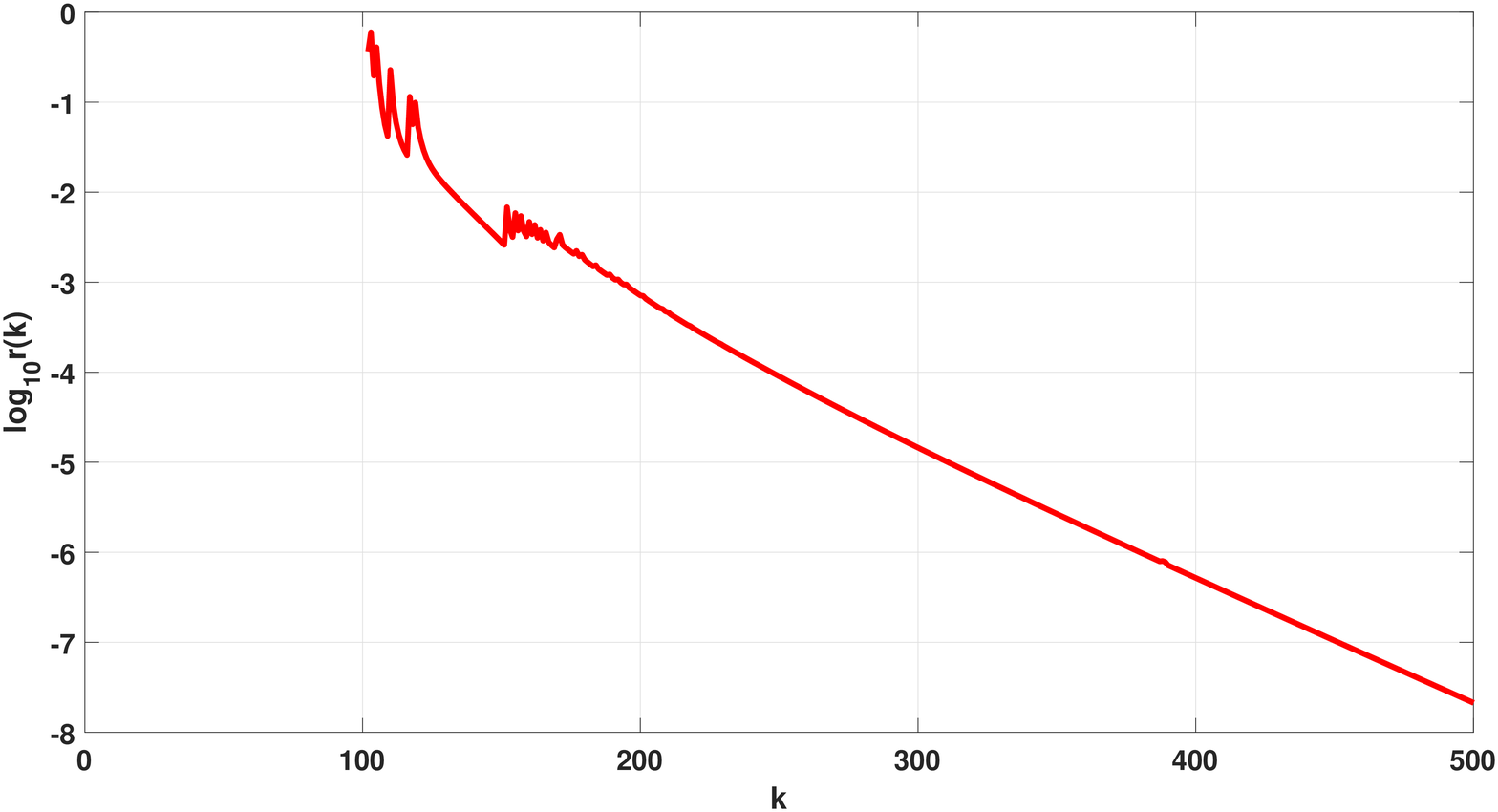}}
\hspace{0.5cm}
\subfigure[]{\includegraphics[width=0.4544\textwidth,height=0.349\textwidth]{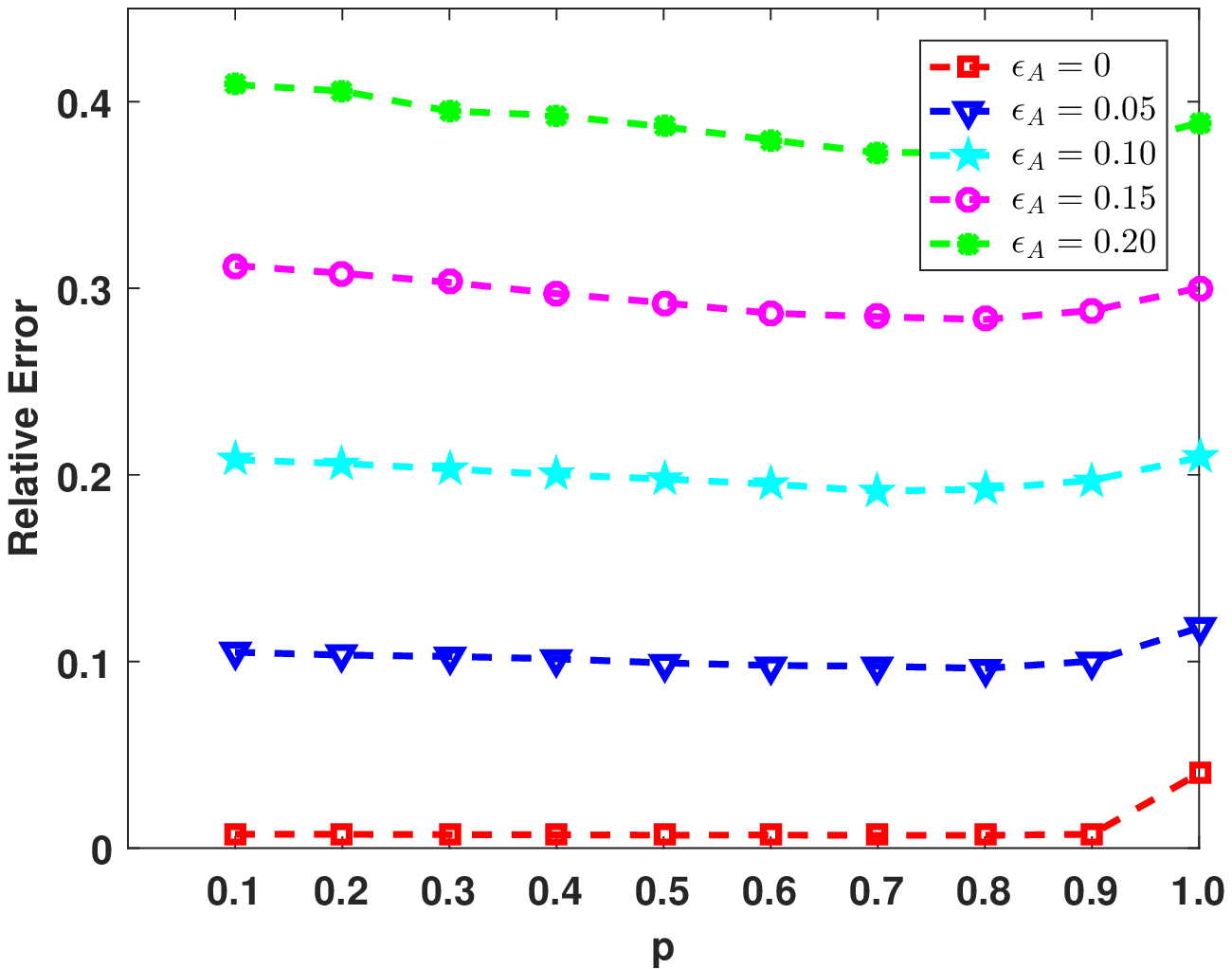}}
\caption{Convergence analysis for Algorithm 4.1 and relative error versus $p$. (a) Convergence analysis, $\epsilon_A=0.05$, (b)Reconstruction performance of completely perturbed nonconvex Schatten $p$-minimization, varying $p$}\label{fig.3}
\end{center}
\vspace*{-14pt}
\end{figure}

The theoretical error bound and $\|X-X^*\|^p_F$ versus the values of $p$ with $a=2$, $\delta_{2ar}=\delta_{(a+1)r}=0.1$ and $r=6$ in different perturbation level $\epsilon_A$, the results are provided in Figs. \ref{fig.7} (a) and (b). The values of $p$ vary from $0.1$ to $0.9$. From the observation of Fig. \ref{fig.7}, $\|X-X^*\|^p_F$ is smaller than the theoretical error bound.

\begin{figure}[h]
\begin{center}
\subfigure[]{\includegraphics[width=0.4344\textwidth,height=0.349\textwidth]{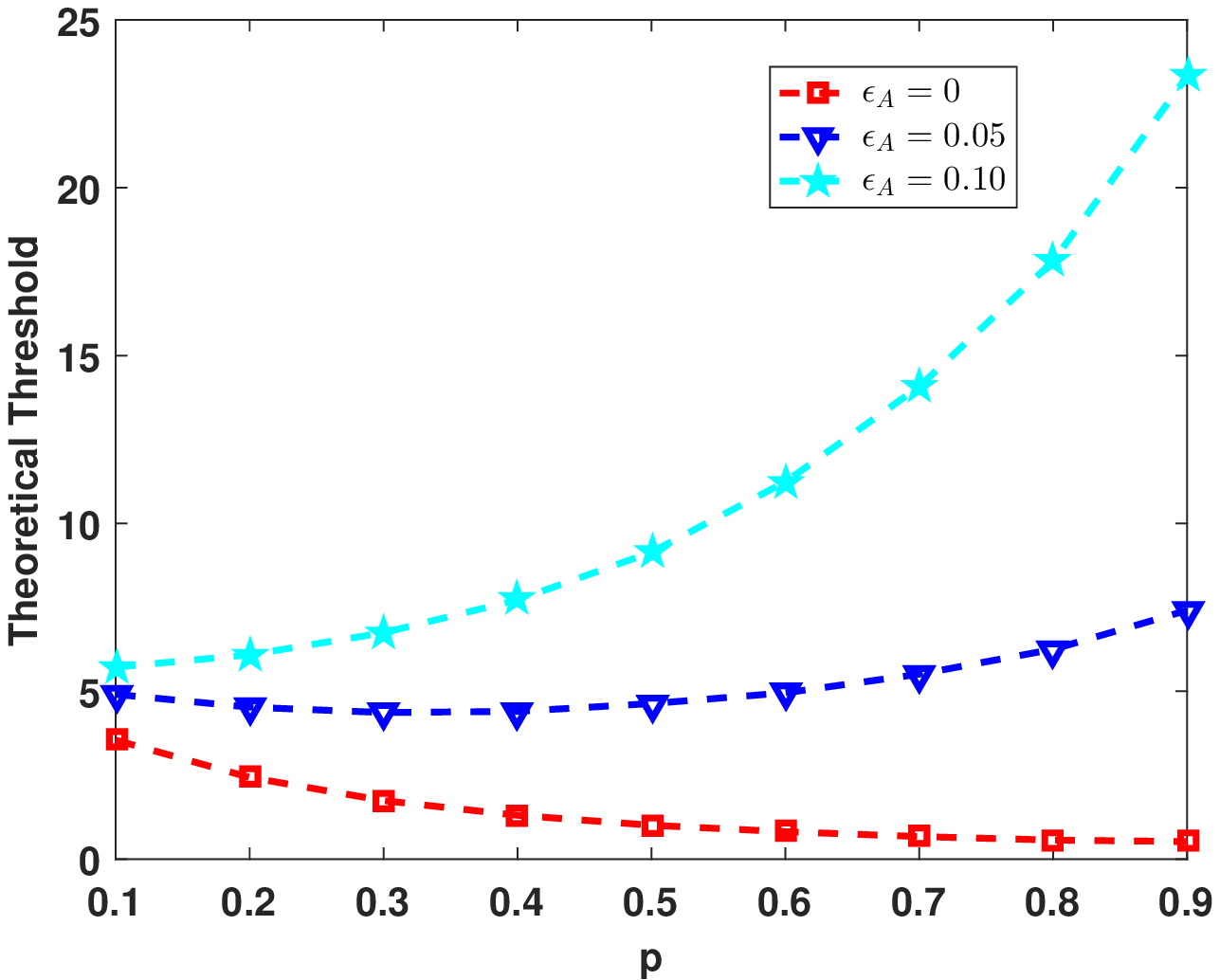}}
\hspace{0.5cm}
\subfigure[]{\includegraphics[width=0.4544\textwidth,height=0.349\textwidth]{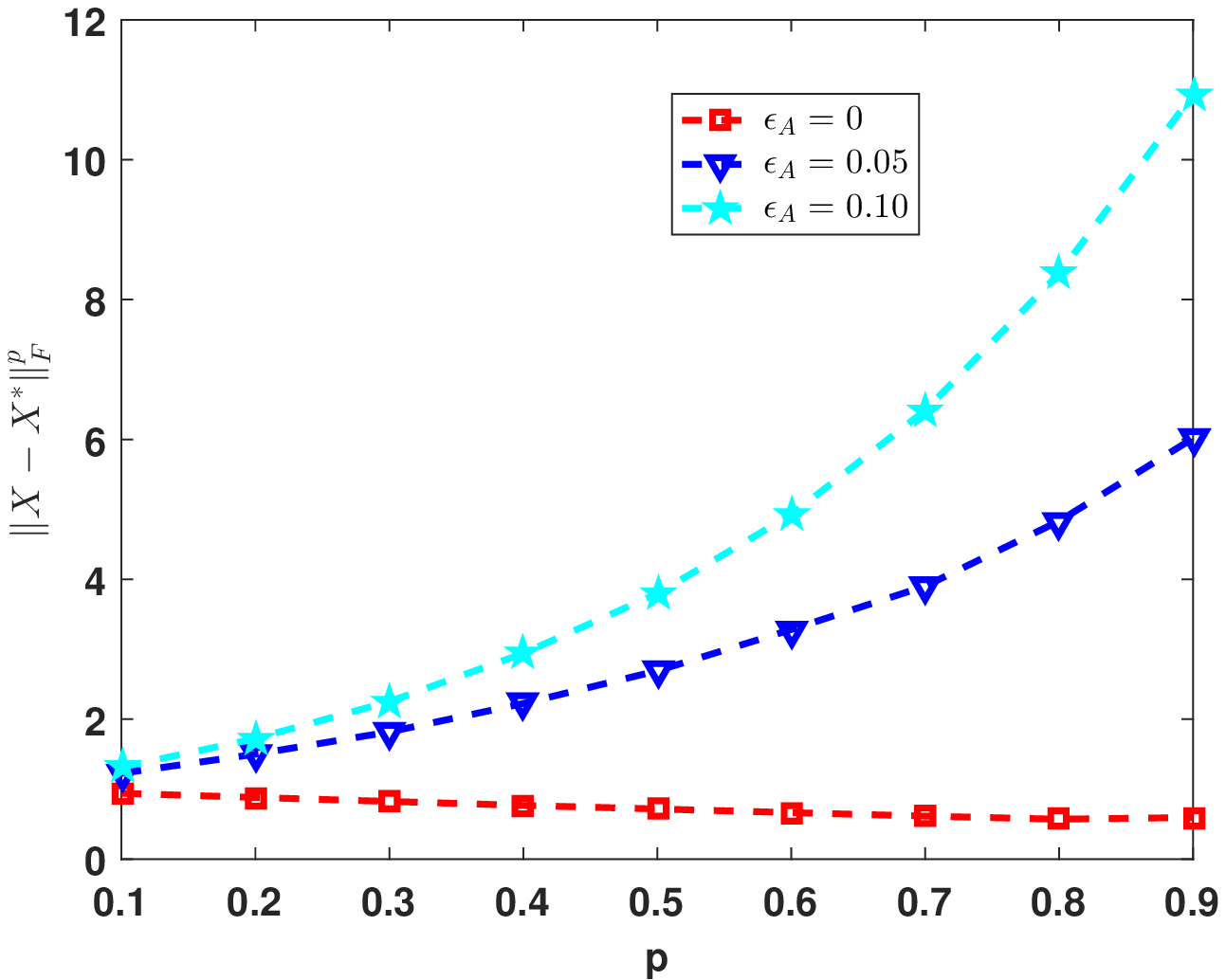}}
\caption{Theoretical error bound and $\|X-X^*\|^p_F$ versus $p$ for (a) $a=2$, $\delta_{2ar}=\delta_{(a+1)r}=0.1$, (b) $r=6$.}\label{fig.7}
\end{center}
\vspace*{-14pt}
\end{figure}

In Fig. \ref{fig.4}, the relative error is plotted versus the number of measurements $M$ in different $\epsilon_A=0,~0.05,~0.10,~0.15,~0.20$ and $p=0.1,~0.3,~0.5,~0.7,~1$, respectively. From Fig. \ref{fig.4}, with the increase of number of measurements or the decrease of perturbation level, the recovery performance of nonconvex Schatten $p$-minimization gradually improves. Moreover, Fig. \ref{fig.4}(b) reveals that the performance of nonconvex Schatten $p$-minimization is better than that of convex nuclear norm minimization. In Fig. \ref{fig.5}, we plot the the relative error versus the rank $r$ of the matrix $X$ for different $\epsilon_A$ and $p$, respectively. The results indicate that the smaller the rank of the matrix, the better the recovery performance, and choosing a smaller perturbation level or the values of $p$ will improve the reconstruction effect of nonconvex Schatten $p$-minimization.

\begin{figure}[h]
\begin{center}
\subfigure[]{\includegraphics[width=0.4344\textwidth,height=0.349\textwidth]{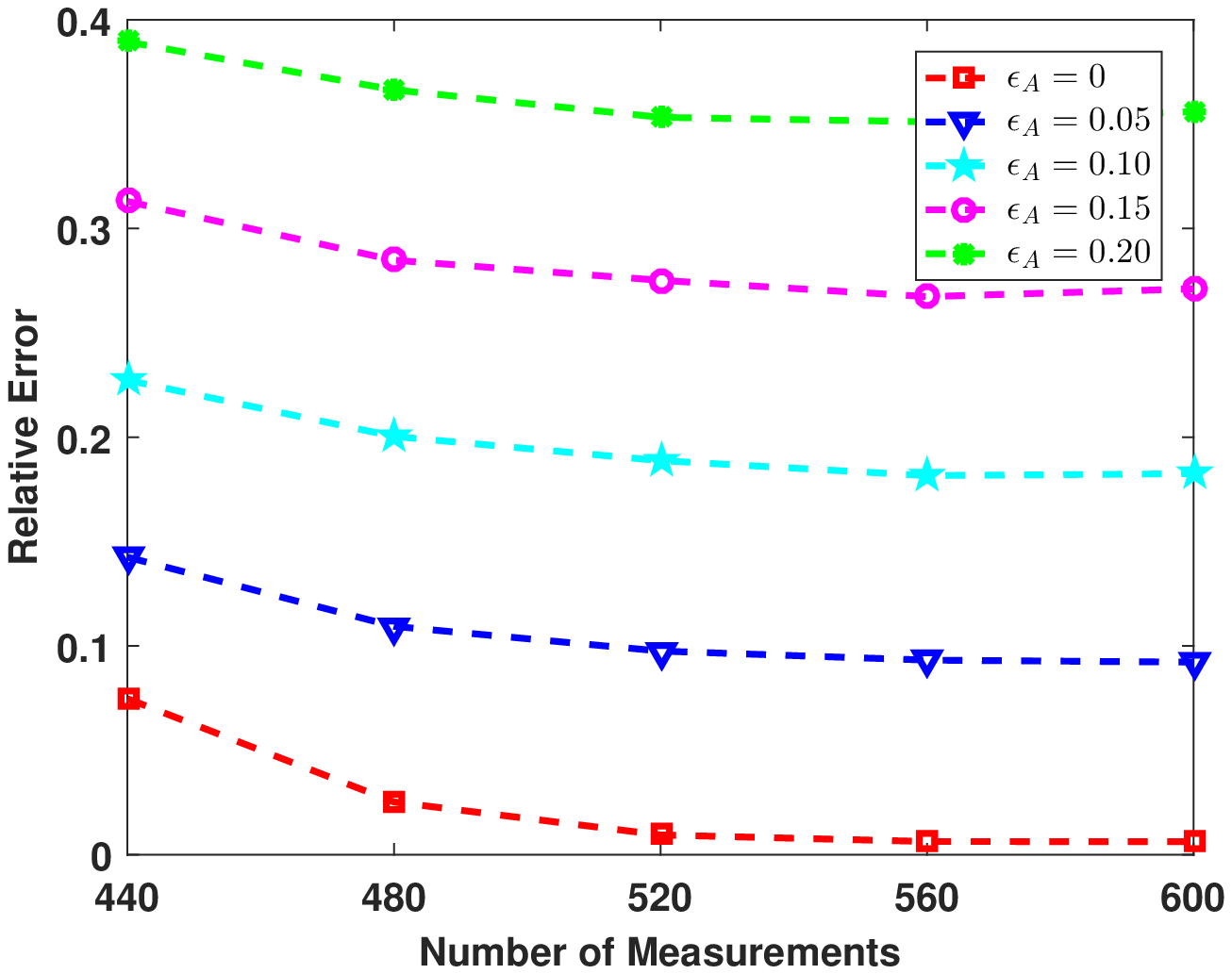}}
\hspace{0.5cm}
\subfigure[]{\includegraphics[width=0.4544\textwidth,height=0.349\textwidth]{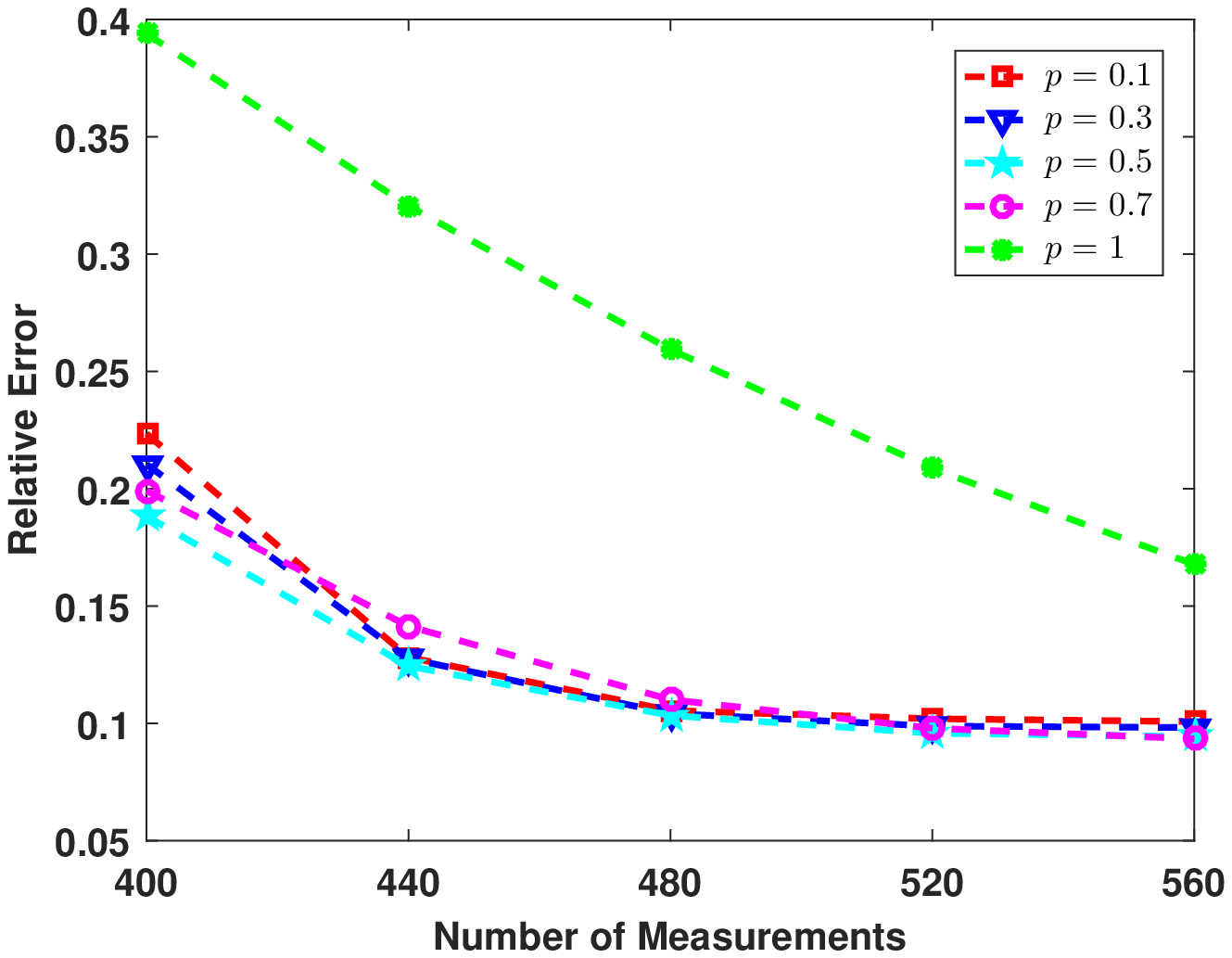}}
\caption{Reconstruction performance of completely perturbed nonconvex Schatten $p$-minimization versus number of measurements $M$. (a) $p=0.7$,  (b)$\epsilon_A=0.05$}\label{fig.4}
\end{center}
\vspace*{-14pt}
\end{figure}

\begin{figure}[h]
\begin{center}
\subfigure[]{\includegraphics[width=0.4344\textwidth,height=0.349\textwidth]{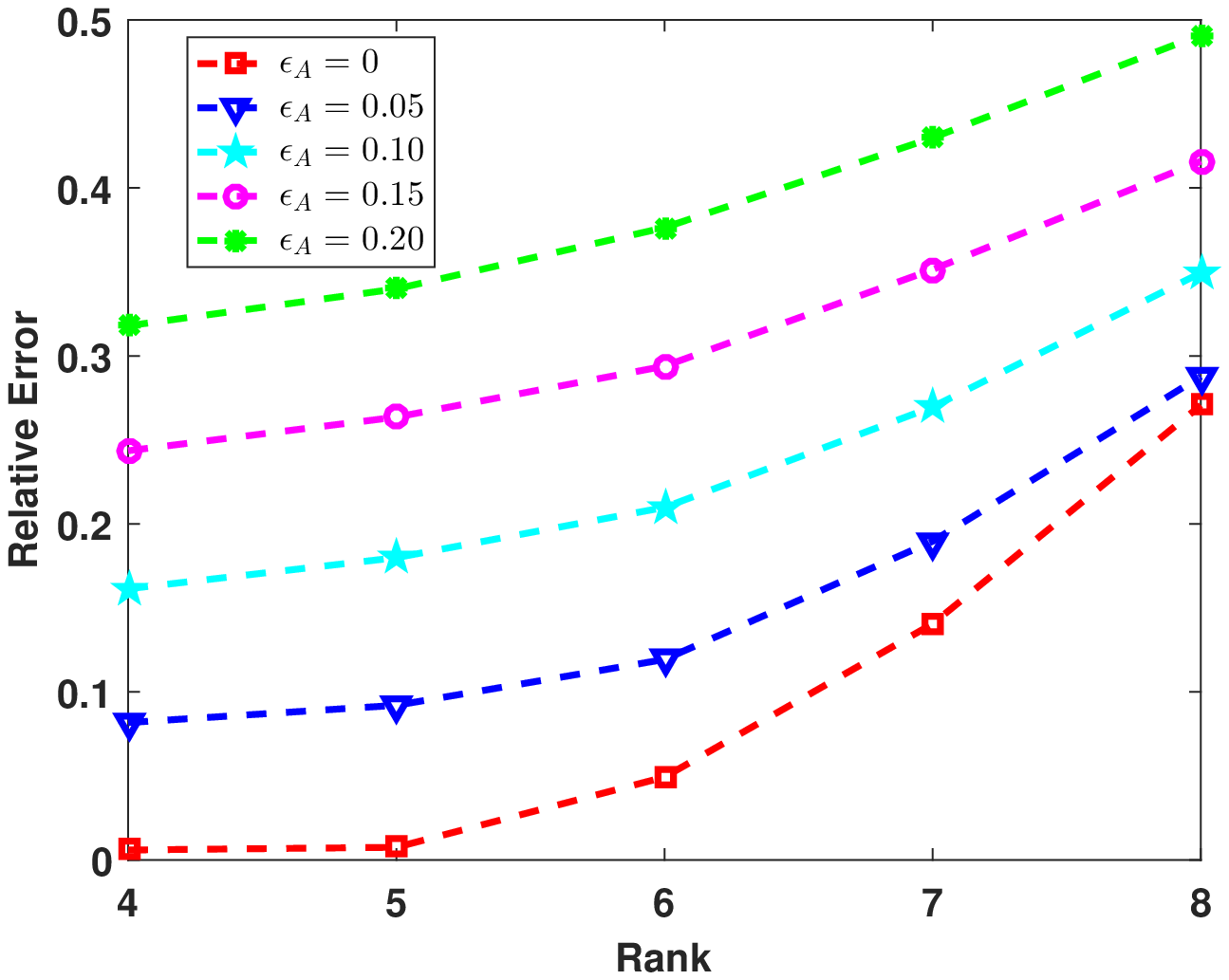}}
\hspace{0.5cm}
\subfigure[]{\includegraphics[width=0.4544\textwidth,height=0.349\textwidth]{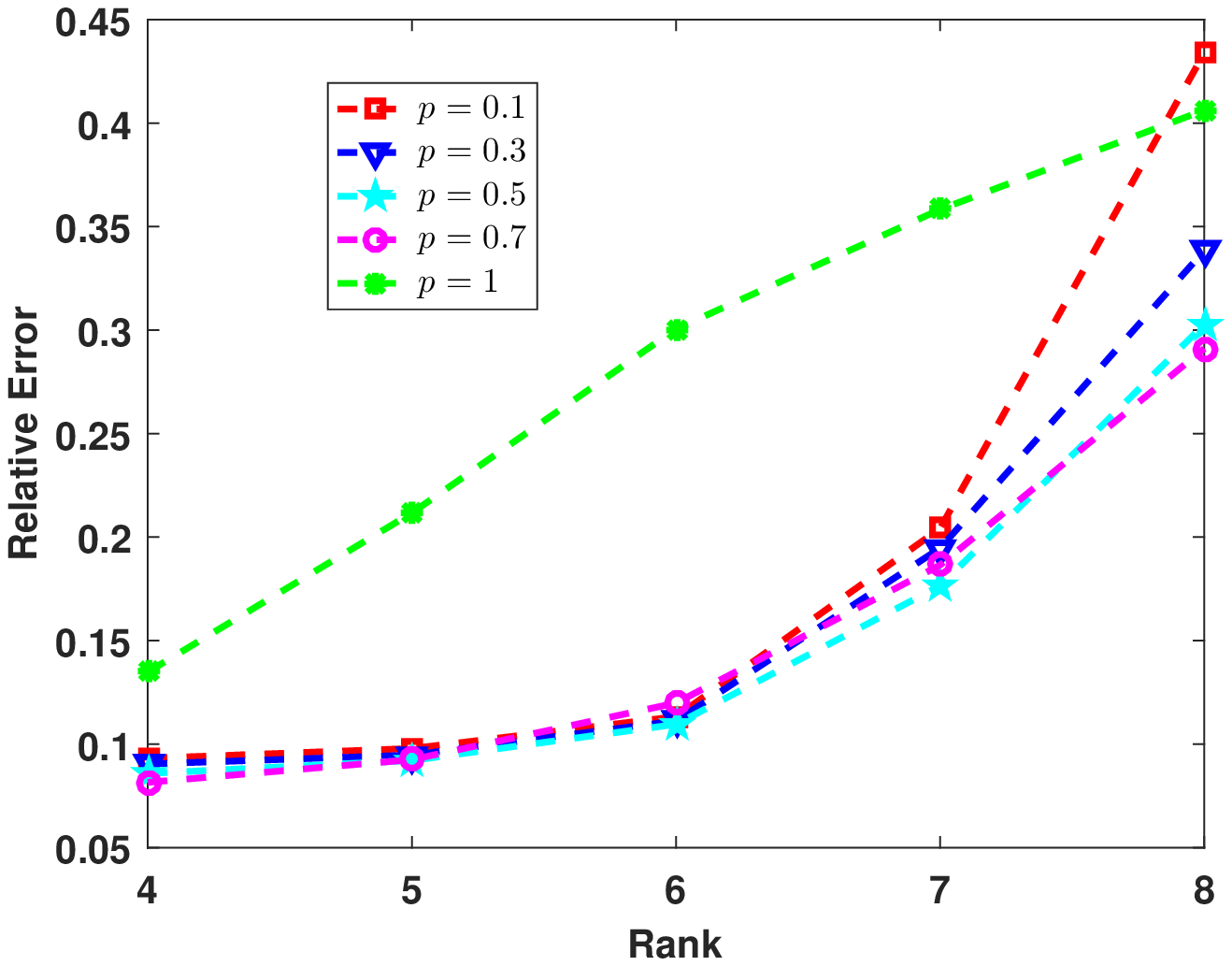}}
\caption{Reconstruction performance of completely perturbed nonconvex Schatten $p$-minimization varying rank $r$. (a) $p=0.7$,  (b) $\epsilon_A=0.05$}\label{fig.5}
\end{center}
\vspace*{-14pt}
\end{figure}

Furthermore, Fig. \ref{fig.6} offers the results concerning the recovery performance of the nonconvex method and the convex method for the $\epsilon_A=0.05$. The curves of relationship between the relative error and the rank $r$ are described by nonconvex Schatten $p$-minimization and convex nuclear norm minimization, respectively. Fig. \ref{fig.6} displays that the performance of nonconvex method is superior to that of the convex method.

\begin{figure}[h]
\begin{center}
{\includegraphics[width=0.50\textwidth]{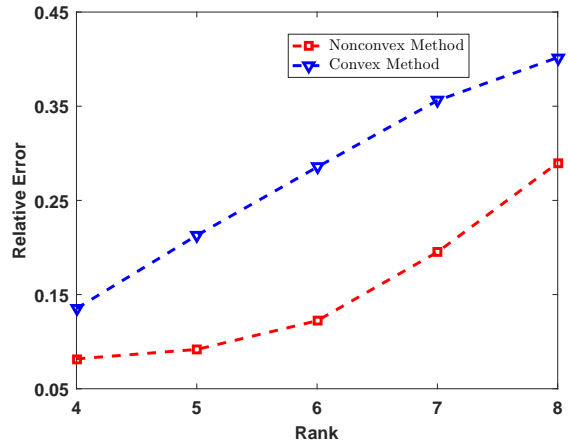}}
\end{center}
\vspace*{-14pt}
\caption{Reconstruction performance of nonconvex Schatten $p$-minimization and convex nuclear norm minimization, varying rank $r$ for $\epsilon_A=0.05$}\label{fig.6}
\end{figure}

\section{Conclusion}

In this paper, we investigate the completely perturbed problem employing the nonconvex Schatten $p$-minimization for reconstructing low-rank matrices. We derive a sufficient condition and the corresponding upper bounds of error estimation. The gained results reveal the nonconvex Schatten $p$-minimization has the stability and robustness for reconstructing low-rank matrices with the existence of a total noise. The practical meaning of gained results, not only can conduct the choice of the linear transformations for reconstructing low-rank matrices, that is, a linear transformation with a smaller RIC instead of a larger one can superior enhance the reconstruction performance, but also can also present a theoretical sustaining to approximation accurateness. Moreover, the numerical experiments further show the verification of our results, and the performance of nonconvex Schatten $p$-minimization is better than that of convex nuclear norm minimization in the complete perturbation situation.

%\noindent
%{\bf Acknowledgements}~~This work was supported by the National Natural Science Foundation
%of China Grant No. 11171275.

\end{document}